\documentclass[journal]{IEEEtran}
\IEEEoverridecommandlockouts
\usepackage{comment}

\usepackage{cite}
\usepackage{overpic}
\usepackage{amsmath,amssymb,amsfonts}
\usepackage{textcomp}
\usepackage{xcolor}
\usepackage{float}
\usepackage{amsthm}
\usepackage{graphicx}
\usepackage{epstopdf}
\usepackage{subfigure}
\usepackage{adjustbox}
\usepackage{stfloats}
\usepackage{amsmath,bm}
\usepackage{amsfonts}
\usepackage{amssymb}
\usepackage{color}
\usepackage{multirow}
\usepackage{multicol}
\usepackage{soul,xcolor}
\usepackage{algorithm}
\usepackage{algorithmic}
\usepackage{url}

\newcommand{\mathacr}[1]{\mathsf{#1}}
\theoremstyle{plain}

\newtheorem{lemma}{Lemma}

\newcommand{\vect}[1]{\mathbf{#1}}

\def\Htran{\mbox{\tiny $\mathrm{H}$}}
\def\Ttran{\mbox{\tiny $\mathrm{T}$}}
\def\CN{\mathcal{N}_{\mathbb{C}}} 

\begin{document}

\title{Unlocking the Energy-Saving Potential in O-RAN Cell-Free Massive MIMO by Joint Orchestration of Radio, Wireless Fronthaul, and Cloud Resources}

\author{Ozan Alp Topal, Özlem Tuğfe Demir, Emil Björnson, and Cicek Cavdar

\thanks{O. A. Topal, E. Björnson and  C. Cavdar are with the School of Electrical Engineering and Computer Science, KTH Royal Institute of Technology, Stockholm, Sweden (e-mail: \{oatopal, emilbjo, cavdar\}@kth.se). Ö. T. Demir was with the Electrical and Electronics Engineering at TOBB University of Economics and Technology, Ankara, Turkiye. Now, she is with the Electrical and Electronics Engineering at Bilkent University, Ankara, Turkiye (ozlemtugfedemir@bilkent.edu.tr).  }  
\thanks{ This work has been part of Celtic-Next project RAI-6Green: Robust and AI Native 6G for Green Networks with project-id: C2023/1-9, partly supported by the Swedish funding agency Vinnova. The work by \"O. T. Demir was supported by the 2232-B International Fellowship for Early Stage Researchers Programme and by the TÜBİTAK Project 122C149 (Intelligent End-to-End Design of Energy-Efficient and Hardware Impairments-Aware Cell-Free Massive MIMO for Beyond 5G), both funded by the Scientific and Technological Research Council of Turkiye (TUBITAK). E.~Bj\"ornson was supported by the FFL18-0277 and SUCCESS grants from SSF.}

}

\maketitle

\begin{abstract}
Network virtualization and cloudification in Open Radio Access Networks (O-RAN) enable joint orchestration of the processing and fronthaul resources, which are essential for realizing the energy-saving potential of cell-free massive MIMO networks. To harness this potential, we investigate cell-free massive MIMO deployed over an O-RAN architecture with a wireless fronthaul that removes the need for fiber deployment. We first model the end-to-end power consumption under wireless fronthaul. Then, we propose a joint orchestration framework for radio, fronthaul, and processing resources that minimizes end-to-end power consumption while satisfying user-equipment (UE) rate requirements and wireless-fronthaul constraints. Two algorithms are developed: a scenario-sampling/group-Lasso method for centralized precoding and a block-coordinate descent method for distributed precoding. Numerical results show that centralized precoding significantly outperforms distributed precoding. End-to-end resource orchestration provides up to $70\%$ energy-savings compared to cloud-only orchestration and up to $15\%$ compared to radio-only orchestration. Moreover, distributing the same total number of antennas across the coverage area, rather than concentrating them at a few radio units (RUs), substantially reduces network power consumption, demonstrating that cell-free massive MIMO can deliver both high performance and high energy efficiency in future mobile networks.
\end{abstract}

\vspace{-1mm}

\begin{IEEEkeywords}
Cell-free massive MIMO, power minimization, resource allocation, energy-saving, O-RAN
\end{IEEEkeywords}

\vspace{-1mm}

\section{Introduction}
Cell-free massive MIMO (multiple-input multiple-output) is a promising candidate for future mobile networks thanks to the improved fairness among user equipment (UEs). While modern cellular networks rely on the centralized deployment of a large number of antennas using massive MIMO technology, cell-free massive MIMO is based on densely deploying many cooperating radio units (RUs) with a smaller number of antennas that are capable of coherent joint transmission/reception in a region \cite{cfmMIMOOr}. 
Given the distributed nature of cell-free massive MIMO, more radio equipment, fronthaul, and processing resources can be deployed, which, without proper control mechanisms, risks significantly increasing the total power consumption \cite{ericsson_2022_energy_5g}. 

To address this, a significant amount of literature has been devoted to developing and investigating energy efficiency in cell-free massive MIMO networks. In \cite{GroupSparsePrecGreen}, the authors propose joint RU and fronthaul activation/deactivation to minimize the network energy consumption. However, their method requires excessive control (in each coherence block). \cite{EEcfmMIMOSparsePrec} overcomes this by proposing a sparse, large-scale processing-based energy efficiency maximization method, but their model neglects the impact of the number of antennas in the processing and fronthaul power. \cite{CellFreeAntennaOptimization} focuses only on radio energy-efficiency maximization, and \cite{minimize_energy_cf}, \cite{radio_min_sparse} focus on radio power minimization through RU shutdown, ignoring the effect of fronthaul and processing. 
Network power consumption should consider the power consumption of radio, fronthaul, and cloud processing elements as in  \cite{ozlem_jsac}, where joint optimization of these elements presents up to $30\%$ energy-saving compared to the case where radio resources are optimized independently of the cloud. 

\vspace{-3mm}
\subsection{Cell-Free Massive MIMO in O-RAN}
Another limitation of the listed works (except \cite{ozlem_jsac}) is the limited consideration of the radio access network (RAN) architecture and the impact of the chosen functional split. Functional splitting depends on separating network functions between the centralized (or distributed) cloud processing and RUs. Centralizing processing in all-purpose cloud centers makes processing in the network significantly more energy-efficient, a concept that has been studied in centralized-RAN (C-RAN), virtualized cloud-RAN \cite{cloudRAN}, and more recently in open RAN (O-RAN) literature \cite{ORAN}. The specific names of the chosen splits vary depending on the different RAN technologies. In this paper, we follow the naming conventions outlined in the 3GPP standardization \cite{3gpp_tr_38_816}. Considering the O-RAN architecture, in this paper, O-Cloud refers to the virtualized all-purpose processor, RU refers to the radio unit, and fronthaul is the link between these two units. To enable coherent joint transmission (CJT) in the cell-free massive MIMO, the only splits that can be implemented on the fronthaul are inter-physical layer (PHY) splits, specifically split options $8$, $7.1$, and $7.2$. For the higher splits (such as Option $6$), the tight synchronization between RUs cannot be guaranteed; therefore, CJT is not possible \cite{larsen_survey_2019}. Among the possible options, Option $8$ centralizes all processing in the cloud unit, converting RUs into solely RF components. In Option $7.1$, processing up to and including discrete Fourier transform/ inverse discrete Fourier transform (DFT/IDFT) is performed at the RU-site, enforcing the deployment of processors to the RU-site, which increases the total power consumption, but reduces the fronthaul load. In Option $7.2$, the processing up to and including precoding is done at the RU-site, further increasing the total power consumption, and reducing the fronthaul load. Although \cite{ozlem_jsac} compares Option $8$ and $7.1$ from the cell-free massive MIMO perspective, it does not utilize the advantage of centralized precoding in these options. As demonstrated in \cite{cell-free-book}, centralized precoding can provide significant performance improvement by allowing the cloud unit to decide the precoding based on channel observations from all the RUs. 

\vspace{-4mm}
\subsection{Wireless Fronthaul in  Cell-Free Massive MIMO}
\vspace{-1mm}

The connection to the centralized cloud is assumed to be an optical transport network in the mentioned prior works, requiring expensive fiber cable deployment for all RUs \cite{cell-free-book}. While this way of deployment for cell-free massive MIMO networks provides a robust transport network, the network deployment has two significant shortcomings. First, the deployment cost becomes considerably higher, and second, it limits the potential realization of cell-free networks in geographical areas that lack access to fiber cables. Wireless fronthaul has been investigated for small cells \cite{8010338, milano}, and massive MIMO \cite{7306533, milano}. Currently, some products enable wireless fronthaul for remote radio equipment by utilizing point-to-point links \cite{ericsson_minilink_6352}. The performance of wireless fronthaul for cell-free networks has been analyzed in \cite{umurhan, wireless_f}; however, these studies do not consider the rate requirements associated with different functional splits, and the energy-efficient operation has not been addressed.  In \cite{neetu}, an optimal joint RU activation and power allocation algorithm is proposed to minimize the network power consumption for unmanned aerial vehicle (UAV) networks with wireless fronthaul limitations. However, the energy reduction is limited due to only allowing RU shutdown, similar to \cite{ozlem_jsac}. In the conference version of this paper \cite{AsilomarConf24}, we proposed activating/deactivating each antenna element jointly with fronthaul and access power allocation instead of completely turning the RU on and off, which has allowed $40\%$ higher energy savings thanks to more refined control considering distributed precoding.

\vspace{-4mm}
\subsection{Contributions}

In this work, we propose a joint processing, fronthaul, and radio resource orchestration to achieve an end-to-end energy-efficient cell-free massive MIMO network with wireless fronthaul. More specifically, we jointly optimize the active processors, the number of antennas, time, and power allocation for the mmWave fronthaul at the open cloud (O-Cloud); the mmWave receiver for the fronthaul, transmit power, number of antenna elements, and active processors at the radio site to minimize end-to-end power consumption while guaranteeing spectral efficiency (SE) requirements of UEs, and the fronthaul rate requirements. As a difference from the conference version \cite{AsilomarConf24}, to fully benefit from the centralization in functional split options $8$ and $7.1$, we propose a network power minimization algorithm that is tailored for centralized precoding. The original problem is non-convex, and the SE requirements cannot be written in a closed form in terms of control variables. To tackle this challenge, we converted the original non-convex problem by utilizing scenario sampling approximation and group Lasso optimization methods into an iterative algorithm that solves a convex problem in each iteration. For Option $7.2$, we consider distributed precoding and propose a block coordinate descent-based iterative algorithm to minimize network power consumption. In the numerical analysis, we compare the power consumption under different functional splits and transport technologies. Our results demonstrate that split options $8$ and $7.1$ save significantly more energy compared with Option $7.2$ thanks to the performance improvement by centralized precoding, and sharing the processing resources in the cloud. Furthermore, the proposed joint optimization algorithm improves energy-savings by $15\%$ compared to the radio-only orchestration and by $70\%$ compared to the cloud-only orchestration. 

    \vspace{-4mm}
\subsection{Organization}
The rest of the paper is organized as follows. Section \ref{sec:system_model} describes the considered system model. The signal model and design of the wireless fronthaul are given in Section \ref{sec:wireless_fronth}. Section \ref{sec:power_consumption} describes the calculation of the network power consumption model. 
Section \ref{sec:distributed} and 
Section \ref{sec:centralized} provide network power minimization algorithms considering distributed and centralized precoding, respectively.  Section \ref{sec:numerical_analysis} presents the numerical results, and Section \ref{sec:conclusion} draws the conclusion.

\vspace{-4mm}

\section{System Model}
\label{sec:system_model}
\begin{figure}[t!]
    \centering    \includegraphics[width=0.8\linewidth]{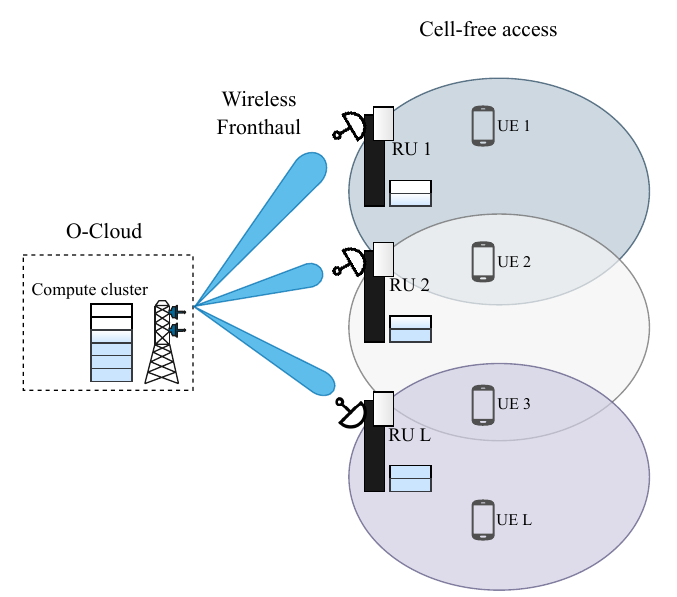}
    \vspace{-3mm}
    \caption{Cell-free massive MIMO architecture illustrating centralized cloud processing with wireless fronthaul.   
    }
    \label{fig:system_model}
    \vspace{-6mm}
\end{figure} 
We consider the downlink of a cell-free massive MIMO system, as illustrated in Fig.~\ref{fig:system_model}, operating in time-division duplex (TDD) mode. We consider that the system is built on top of a specific deployment of O-RAN architecture, in which open distributed unit (O-DU) and open centralized unit (O-CU) are bundled and named as O-Cloud. O-Cloud has the virtualization and resource-sharing capabilities\cite{ozlem_jsac}.  
The system consists of $L$ RUs connected to the O-Cloud via a high-frequency (mmWave) wireless fronthaul link, where RUs and the O-Cloud have line-of-sight (LOS) connectivity.  The RUs serve $K$ single-antenna UEs over a mid-band frequency (sub-6\,GHz) channel. The channel between RUs and the cloud unit will be referred to as the fronthaul channel, while the channel between RUs and UEs will be referred to as the access channel. For the access links, we assume uncorrelated Rayleigh fading channels as in \cite{interdonato2020local}, i.e., the channel from UE $k$ to RU $l$ is $\vect{h}_{kl} \sim \mathcal{N}_{\mathbb{C}}(\vect{0}, \beta_{kl} \vect{I}_{M^{\mathrm{ac}}})$, where $\beta_{kl}>0$ is the respective large-scale fading coefficient.   Each RU is equipped with $M^\mathrm{ac}$ antennas for the access channel and $M^\mathrm{frh}$ antennas for the fronthaul channel. We assume that each antenna element has an individual active transceiver chain, fully capable of digital beamforming.  In the proposed system, we consider that each RU can configure its active transceiver chains based on the quality-of-service (QoS) requirements, fronthaul load limitations, and power minimization. Therefore, we denote the activated antennas at RU $l$  by $M_l \in \{0,\ldots,M^\mathrm{ac}\}$, where $M_l=0$ means that RU is deactivated. The bandwidth utilized in the access channel and the fronthaul channel are denoted as $B^{\mathrm{ac}}$ and $B^{\mathrm{frh}}$, respectively. As can be seen from Fig. \ref{fig:system_model}, the compute cluster is located in the O-Cloud, which is capable of sharing computational resources for all deployed RUs, while each RU also includes compute resources co-located with the radio. Depending on the selected functional split, baseband processing can occur either at the O-Cloud site or at the RU site.

We let $\tau_c$ denote the number of symbols in a TDD frame, where $\tau_p$ symbols for the uplink training signaling, and $\tau_d=\tau_c-\tau_p$ symbols for downlink data. Based on the chosen functional split, the precoding capability of the cell-free massive MIMO system changes. For split options 8 and 7.1, centralized channel estimation and precoding can be utilized, while for split option 7.2, distributed operations must be implemented. Due to space limitations, we omit the explanations for the uplink training phase. 
We assume that during the channel estimation phase, all RUs and antennas are active, as also mandated by the 3GPP protocols \cite{ericsson_2022_energy_5g}. Applying minimum mean-squared error (MMSE) channel estimation, the estimated channel vector between RU $l$ and UE $k$ is denoted by $\hat{\vect{h}}_{kl} \sim \mathcal{N}_{\mathbb{C}}\left(\mathbf{0}, \gamma_{kl}\vect{I}_{M^\mathrm{ac}} \right)$ and the channel estimation error $\tilde{\mathbf{h}}_{k l} \sim \mathcal{N}_{\mathbb{C}}\left(\mathbf{0}, \mathbf{C}_{k l}\right)$, where $\mathbf{C}_{k l} = (\beta_{kl} - \gamma_{kl}) \vect{I}_{M^\mathrm{ac}}$ denote the estimation error correlation matrix. For any antenna element $n$, $\gamma_{kl} = \frac{p_k \tau_{p} \beta_{l, k}^2}{\tau_{p} \sum_{t \in \mathcal{P}_k} p_t \beta_{l, t}+1},$
where $p_k$ is the uplink power of UE $k$, and $\mathcal{P}_k \subset\{1, \ldots, K\}$ is the set of indices that are assigned to the same plot as UE $k$ \cite{interdonato2020local}.

\vspace{-2mm}

\subsection{Downlink Data Transmission with Centralized Precoding (Option 8 and 7.1)}
For the functional split options 7.1 and 8, precoding is done in O-Cloud, corresponding to the centralized precoding schemes in cell-free massive MIMO literature \cite{cell-free-book}. After receiving the precoded downlink signals from the O-Cloud, RUs can simultaneously transmit the same data signal to enable a coherently enhanced signal at each receiving UE. The received downlink signal at UE $k$ is given as 
\begin{equation}
 y_k   =\sum_{i=1}^K \mathbf{h}_k^{\Htran} \mathbf{w}_i \varsigma_i+ n_k,
\end{equation}
where $\varsigma_i$ is the unit-power downlink data signal for UE $i$ ($\mathbb{E}\left\{|\varsigma_{i}|^2\right\}=1$), $\mathbf{h}_k=\left[\mathbf{h}_{k 1}^{\Ttran} \ldots \mathbf{h}_{k L}^{\Ttran}\right]^{\Ttran} \in \mathbb{C}^{LM^{\rm ac}}$ is the collective channel to the UE $k$ from all RUs,  $\mathbf{w}_i=\left[\mathbf{w}_{i 1}^{\Ttran} \ldots \mathbf{w}_{i L}^{\Ttran}\right]^{\Ttran} \in \mathbb{C}^{LM^{\rm ac}}$ is the collective precoding vector intended for UE $i$, and $n_k\sim \CN(0,\sigma^2)$ is the additive noise. Note that, if an RU does not serve a UE, we assume $\mathbf{w}_{i l} = \vect{0}$.

\begin{lemma}
\label{lemma:centralized_SINR}
When UE $k$ knows the average received signal, $\mathbb{E}\left\{\mathbf{h}_k^{\Htran}  \mathbf{w}_k\right\}$, an achievable spectral efficiency (SE) in the downlink operation is
\begin{align}
\mathrm{SE}_k = \frac{\tau_d}{\tau_c} \log_2\left(1 + \mathrm{SINR}_k \right),
\end{align}
where the
    effective signal-to-interference-plus-noise ratio (SINR) of UE $k$, $\mathrm{SINR}_k$ is given in \eqref{eq:SINR_centralized} at the top of the next page. 
\end{lemma}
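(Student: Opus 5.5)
The plan is the standard use-and-then-forget (hardening) bound from the cell-free massive MIMO literature \cite{cell-free-book}. We rewrite the received signal at UE $k$ so that the only term used for decoding is the deterministic average gain:
\begin{align*}
y_k = \mathbb{E}\{\mathbf{h}_k^{\Htran}\mathbf{w}_k\}\varsigma_k + \big(\mathbf{h}_k^{\Htran}\mathbf{w}_k - \mathbb{E}\{\mathbf{h}_k^{\Htran}\mathbf{w}_k\}\big)\varsigma_k + \sum_{i\neq k}\mathbf{h}_k^{\Htran}\mathbf{w}_i\varsigma_i + n_k .
\end{align*}
The first term is a desired signal over a known deterministic channel, since by assumption UE $k$ knows $\mathbb{E}\{\mathbf{h}_k^{\Htran}\mathbf{w}_k\}$. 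The remaining three terms are collected into an effective noise $\nu_k$.

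The key step is to check that $\nu_k$ is uncorrelated with $\varsigma_k$.
\begin{itemize}
\item The beamforming-gain-uncertainty term has zero mean and multiplies $\varsigma_k$. Since the data symbols are independent of the channels and precoders, its correlation with $\varsigma_k$ is $\mathbb{E}\{\mathbf{h}_k^{\Htran}\mathbf{w}_k - \mathbb{E}\{\mathbf{h}_k^{\Htran}\mathbf{w}_k\}\}\,\mathbb{E}\{|\varsigma_k|^2\}=0$.
\item The interference terms involve $\varsigma_i$ with $i\neq k$, which are independent, zero mean and unit power, so they are uncorrelated with $\varsigma_k$.
\item The noise $n_k$ is independent of everything else.
\end{itemize}

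We then invoke the worst-case uncorrelated additive noise result: the mutual information of a deterministic channel with noise that is uncorrelated with the input is lower-bounded by treating that noise as independent Gaussian with the same variance. This gives $\log_2(1+\mathrm{SINR}_k)$ per downlink symbol, with
\begin{align*}
\mathrm{SINR}_k=\frac{|\mathbb{E}\{\mathbf{h}_k^{\Htran}\mathbf{w}_k\}|^2}{\sum_{i=1}^K\mathbb{E}\{|\mathbf{h}_k^{\Htran}\mathbf{w}_i|^2\}-|\mathbb{E}\{\mathbf{h}_k^{\Htran}\mathbf{w}_k\}|^2+\sigma^2}.
\end{align*}
The denominator is obtained by summing the variances of the mutually uncorrelated parts of $\nu_k$. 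In that sum, the self-interference variance $\mathbb{E}\{|\mathbf{h}_k^{\Htran}\mathbf{w}_k|^2\}-|\mathbb{E}\{\mathbf{h}_k^{\Htran}\mathbf{w}_k\}|^2$ merges with the $i\neq k$ terms, which yields the sum over all $i$. Only $\tau_d$ of the $\tau_c$ symbols in each frame carry downlink data, so the bound is multiplied by the prelog factor $\tau_d/\tau_c$.

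The main obstacle is making the uncorrelatedness and the worst-case-noise argument rigorous. The precoders $\mathbf{w}_i$ depend on the channel estimates and hence are correlated with $\mathbf{h}_k$, and the effective noise is non-Gaussian and may depend on the channel. The argument therefore has to rely only on second-order statistics averaged over both channel realizations and data. This is handled by working with the ergodic capacity of the whole sequence of coherence blocks and applying the standard lemma from \cite{cell-free-book} conditionally only on the deterministic gain.

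The expectations in the expression above can be computed in closed form for specific precoders such as MMSE or MR. This is not needed, however, since \eqref{eq:SINR_centralized} is stated in terms of these expectations, which the paper later approximates via scenario sampling.
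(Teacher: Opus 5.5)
There is a genuine gap. Your use-and-then-forget derivation is sound, but it ends at
\[
\mathrm{SINR}_k=\frac{|\mathbb{E}\{\mathbf{h}_k^{\Htran}\mathbf{w}_k\}|^2}{\sum_{i=1}^K\mathbb{E}\{|\mathbf{h}_k^{\Htran}\mathbf{w}_i|^2\}-|\mathbb{E}\{\mathbf{h}_k^{\Htran}\mathbf{w}_k\}|^2+\sigma^2}.
\]
This is the known bound that the paper simply cites as \cite[(6.10)]{cell-free-book}. It is \emph{not} the expression \eqref{eq:SINR_centralized}. Your closing remark that \eqref{eq:SINR_centralized} is stated in terms of these expectations is a misreading. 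That expression is written in terms of the channel \emph{estimate} $\hat{\mathbf{h}}_k$ and the error covariance $\tilde{\vect{C}}_k$, not the true channel $\mathbf{h}_k$. Converting one form into the other is exactly what the paper's proof does, and that step is missing from your proposal.

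To close the gap, write $\mathbf{h}_k=\hat{\mathbf{h}}_k+\tilde{\mathbf{h}}_k$ with $\tilde{\mathbf{h}}_k\sim\CN(\vect{0},\tilde{\vect{C}}_k)$ and $\tilde{\vect{C}}_k=\mathrm{diag}(\vect{C}_{k1},\ldots,\vect{C}_{kL})$.

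\begin{itemize}
\item \emph{Independence.} With MMSE estimation of Gaussian channels, $\tilde{\mathbf{h}}_k$ is independent of all pilot observations, hence of all estimates. It is therefore independent of every $\mathbf{w}_i$, provided the precoders are functions of the estimates only. You need to state that assumption explicitly.
\item \emph{Numerator.} Independence and zero mean give $\mathbb{E}\{\tilde{\mathbf{h}}_k^{\Htran}\mathbf{w}_k\}=\mathbb{E}\{\tilde{\mathbf{h}}_k\}^{\Htran}\mathbb{E}\{\mathbf{w}_k\}=0$. Hence $\mathbb{E}\{\mathbf{h}_k^{\Htran}\mathbf{w}_k\}=\mathbb{E}\{\hat{\mathbf{h}}_k^{\Htran}\mathbf{w}_k\}$.
\item \emph{Cross terms.} In $\mathbb{E}\{|\mathbf{h}_k^{\Htran}\mathbf{w}_i|^2\}$, the terms $\mathbb{E}\{\mathbf{w}_i^{\Htran}\hat{\mathbf{h}}_k\tilde{\mathbf{h}}_k^{\Htran}\mathbf{w}_i\}$ and their conjugates vanish. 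This follows by conditioning on $(\hat{\mathbf{h}}_k,\mathbf{w}_i)$.
\item \emph{Error term.} Also by conditioning on $\mathbf{w}_i$, $\mathbb{E}\{\mathbf{w}_i^{\Htran}\tilde{\mathbf{h}}_k\tilde{\mathbf{h}}_k^{\Htran}\mathbf{w}_i\}=\operatorname{Tr}(\mathbb{E}\{\mathbf{w}_i\mathbf{w}_i^{\Htran}\}\tilde{\vect{C}}_k)$.
\end{itemize}

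Substituting these into your bound gives \eqref{eq:SINR_centralized}. This step is not cosmetic. It is what allows the later scenario-sampling step to draw only samples $\hat{\mathbf{h}}_{k,\vartheta}$ and treat $\tilde{\vect{C}}_k$ as deterministic in \eqref{eq:SINR_centralized_scenario}. It also fails if the precoders depend on anything correlated with $\tilde{\mathbf{h}}_k$.
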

\begin{figure*}[t!]
\begin{equation}
    \mathrm{SINR}_k = \frac{\left|\mathbb{E}\left\{\hat{\mathbf{h}}_k^{\Htran}  \mathbf{w}_k\right\}\right|^2}{\sum_{i=1}^K \mathbb{E}\left\{ \left|  \hat{\mathbf{h}}_k^{\Htran}  \mathbf{w}_i \right|^2   \right\} + \sum_{i=1}^K  \operatorname{Tr}\left(  \mathbb{E}\left\{   \mathbf{w}_i \mathbf{w}^{\Htran}_i\right\} \tilde{\vect{C}}_k \right) -\left|\mathbb{E}\left\{\hat{\mathbf{h}}_k^{\Htran}  \mathbf{w}_k\right\}\right|^2+\sigma^2}.
    \label{eq:SINR_centralized}
\end{equation}
\hrulefill
\end{figure*}
\begin{proof}
The effective SINR term based on the channel vector is given in \cite[(6.10)]{cell-free-book}. We will reformulate this expression by using the fact that the channel estimation and estimation error are independent and $\hat{\vect{h}}_{k} + \tilde{\mathbf{h}}_{k } = {\mathbf{h}}_{k }$. The numerator term in \cite{cell-free-book} can be reformulated by using the fact that 
    \begin{align}
        \mathbb{E}\left\{\mathbf{h}_k^{\Htran}  \mathbf{w}_k \right\} &=   \mathbb{E}\left\{\hat{\mathbf{h}}_k^{\Htran}  \mathbf{w}_k\right\} + \mathbb{E}\left\{\tilde{\mathbf{h}}_k^{\Htran}  \mathbf{w}_k\right\}  \nonumber  \\
        &\hspace{-12mm}= \mathbb{E}\left\{\hat{\mathbf{h}}_k^{\Htran}  \mathbf{w}_k\right\} + \underset{= {0}}{\underbrace{ \mathbb{E}\left\{\tilde{\mathbf{h}}_k\right\}^{\Htran} \mathbb{E}\left\{ \mathbf{w}_k\right\}}}  = \mathbb{E}\left\{\hat{\mathbf{h}}_k^{\Htran}  \mathbf{w}_k\right\}.
    \end{align} 
The denominator term can be obtained by 
 \begin{align}
        \mathbb{E}\left\{\left|\mathbf{h}_k^{\Htran}  \mathbf{w}_i \right|^2 \right\} & =   \mathbb{E}\left\{  
\mathbf{w}^{\Htran}_i \hat{\mathbf{h}}_k  \hat{\mathbf{h}}_k^{\Htran}   \mathbf{w}_i   \right\} + \mathbb{E}\left\{  \mathbf{w}^{\Htran}_i \tilde{\mathbf{h}}_k  \tilde{\mathbf{h}}_k^{\Htran}   \mathbf{w}_i   \right\}  \nonumber  \\
        &\hspace{-12mm}= \mathbb{E}\left\{ \left|  \hat{\mathbf{h}}_k^{\Htran}  \mathbf{w}_i \right|^2   \right\} + \operatorname{Tr}\left(  \mathbb{E}\left\{   \mathbf{w}_i \mathbf{w}^{\Htran}_i\right\} \tilde{\vect{C}}_k \right),  
    \end{align} 
where $\tilde{\vect{h}}_k \sim \mathcal{N}_{\mathbb{C}}(\boldsymbol{0}, \tilde{\vect{C}}_k)$ and $\tilde{\vect{C}}_k = \mathrm{diag}(\vect{C}_{k1}, \ldots, \vect{C}_{kL})$.   
When these terms are replaced, the effective SINR term can be obtained as in \eqref{eq:SINR_centralized}.
\end{proof}
 Lemma \ref{lemma:centralized_SINR} provides an achievable SE for the downlink operation assuming the precoding vectors are a function of the estimated channel vectors. 
 In the next part, we will provide a closed-form SE specific to distributed precoding.

\vspace{-2mm}

\subsection{Downlink Data Transmission with Distributed Precoding (Option 7.2)}
 For the functional split Option $7.2$, all PHY operations higher than precoding are done in the centralized cloud, where precoding and lower operations are done in the RUs, corresponding to the distributed precoding schemes in cell-free massive MIMO literature \cite{cell-free-book}. After receiving the downlink data signals from the cloud, the RUs can simultaneously apply local precoding and transmit the same data signal to enable a coherently enhanced signal at each receiving UE. The transmit signal from RU $l$ can be given as 
\begin{equation}
    \vect{x}_l = \sum_{k=1}^K \sqrt{\rho_{kl}} \vect{w}_{kl} \varsigma_{k},
\end{equation}
where $\rho_{kl}$ is the  transmit power assigned for UE $k$ at RU $l$. The precoding vector $\vect{w}_{kl} \in \mathbb{C}^{M^{\rm ac}}$ is used by RU $l$ towards UE $k$ and satisfies $\mathbb{E}\left\{\|\vect{w}_{kl} \|^2\right\} = 1$.\footnote{When $M_l$ out of $M^{\rm ac}$  antennas are used, the precoding vector entries corresponding to the idle antennas can be set to zero.} The received signal at UE $k$ becomes 
\begin{equation}
    y_k = \sum_{l=1}^{L} \vect{h}^{\Htran}_{kl} \vect{x}_l + n_k.
\end{equation}
We consider local protective partial zero-forcing (PPZF) as the downlink precoding scheme \cite{interdonato2020local}. 
PPZF is capable of providing the right balance between the array gain and interference cancellation compared to other distributed precoding schemes. In PPZF, each RU divides UEs into two distinct sets: strong-channel UEs and weak-channel UEs. Then, the RUs utilize ZF precoding for the strong-channel UEs, and protective MRT for the weak-channel UEs. The protectiveness of MRT comes from canceling out the interference of weak-channel UEs to strong-channel UEs, creating protection for the strong-channel UEs. We let $\mathcal{S}_l$ and $\mathcal{W}_l$ denote the sets of strong-channel UEs and weak-channel UEs at RU $l$, respectively, where $\mathcal{S}_l \bigcap \mathcal{W}_l = \varnothing$.

\begin{figure*}[tb]
\begin{equation}
    \operatorname{SINR}_k =\frac{\left(\sum_{l=1}^L \sqrt{\left(M_l-\tau_{\mathcal{S}_l}\right) \rho_{k l} \gamma_{k l}}\right)^2}{\sum_{t \in \mathcal{P}_k \backslash\{k\}}\left(\sum_{l=1}^L \sqrt{\left(M_l-\tau_{\mathcal{S}_l}\right) \rho_{tl} \gamma_{kl}}\right)^2+\sum_{t=1}^K \sum_{l=1}^L \rho_{tl}\left(\beta_{kl}-\delta_{kl} \gamma_{kl}\right)+\sigma^2}.
    \label{eq:SINR}
\end{equation}
\hrulefill
\vspace{-4mm}
\end{figure*}

The exact expressions of the precoding vectors are also omitted due to space limitations, but the readers can refer to \cite{interdonato2020local}. An achievable SE for UE $k$ is given by $\mathrm{SE}_k = \frac{\tau_d}{\tau_c} \log_2(1+ \mathrm{SINR}_k)$, where $\mathrm{SINR}_k$ is the effective SINR of UE $k$, and for the considered precoding scheme can be given as in \eqref{eq:SINR} as shown at the top of the page. $\delta_{kl}$ denotes the membership decision of UE $k$, where $ \delta_{kl} = 1$,  if  $k \in \mathcal{S}_l$, or $\delta_{kl} =0$ if  $k \in \mathcal{W}_l$.
$\tau_{\mathcal{S}_l} \leq \tau_{p}$ denotes the number of pilot signals for the strong-channel UEs at RU $l$.
The steps to derive the SINR expression are skipped due to the page limitation; however, the proof can be obtained by following the steps given in Appendix C of \cite{interdonato2020local} by assuming that each RU can have a different number of antennas. 

\vspace{-2mm}
\section{Wireless Fronthaul Design}
\label{sec:wireless_fronth}
We consider a combination of time-division multiple
access (TDMA) and space-division multiple access (SDMA) for the fronthaul channel. Hybrid beamforming is used in the O-Cloud, where it is equipped with $M_c$ antennas driven by $N_c$ RF chains, where $N_c \ll M_c$. O-Cloud divides RUs into distinct groups with a maximum size of  $N_c$. Then, the TDMA protocol is applied between groups. 

\vspace{-2mm}
\subsection{Fronthaul Signal Transmission}
We let $\mathcal{L}_i$ denote the $i$th group of RUs, where $\sum_{i=1}^{\lceil L/N_c \rceil}|\mathcal{L}_i| = L$. The received signal at RU $l$ in $\mathcal{L}_i$ is 
\begin{equation}
    \vect{y}_{l} = \vect{G}_{l} \vect{F}_i \vect{W}_i \vect{s}_i + \vect{n}_{l},
\end{equation}
where $ \vect{G}_{l} \in \mathbb{C}^{   M^{\mathrm{frh}} \times M_c }$ is the downlink fronthaul channel between the cloud and RU $l$. Since the fronthaul links are assumed in LOS, and the RU deployments are static, the fronthaul channel is assumed to be perfectly known in the cloud.  

We let $\vect{s}_i \in \mathbb{C}^{|\mathcal{L}_i|}$ denote the fronthaul signals of RUs in $\mathcal{L}_i$ and $\vect{n}_l\sim \CN(\vect{0},\sigma^2\vect{I}_{M^{\rm frh}})$ is the additive noise. 
$\vect{F}_i \in \mathbb{C}^{M_c \times N_c}$ and $\vect{W}_i\in \mathbb{C}^{N_c \times |\mathcal{L}_i|}$ are the analog and digital precoding matrices. The RUs also perform analog combining, after which the corresponding received signal can be represented as 
\begin{equation}
    {\hat{y}}_{l} = \vect{v}_l^{\Htran} \vect{G}_{l} \vect{F}_i \vect{W}_i \vect{s}_i + \vect{v}_l^{\Htran} \vect{n}_{l},
\end{equation}
where $\vect{v}_l \in \mathbb{C}^{M^{\mathrm{frh}} }$. We assume that O-Cloud chooses the columns of the $\vect{F}_i$ as the array response vectors in the directions of the corresponding RUs. Similarly, the combining vectors are chosen as the array response vectors in the direction from the cloud to the corresponding RU.  
By including the effects of analog beamforming into the channel, we can characterize equivalent channel representation as $(\vect{g}^{\mathrm{eq}}_l)^{\Htran} = \vect{v}_l^{\Htran} \vect{G}_{l} \vect{F}_i $, and $\vect{\Bar{G}}_i = [\vect{g}^{\mathrm{eq}}_1 \ldots \vect{g}^{\mathrm{eq}}_{|\mathcal{L}_i|}]^{\Htran}$. Applying ZF precoding at the cloud, we obtain the achievable data rate of RU $l$ as \cite[Ch. 6]{bjornson2024introduction}

\begin{equation}
R^{\rm frh}_l = t_iB^{\mathrm{frh}}\log_2 \left(1+ \Lambda_{ll} \bar{p}_l \right),
\end{equation}
where $\Lambda_{ll}$ is equal to $1/(\sigma^2\left[(\vect{\Bar{G}}_i\vect{\Bar{G}}_i^{\Htran})^{-1}\right]_{l,l})$ and $t_i$ is the allocated time portion to group $i$, where $\sum_{i=1}^{I} t_i = 1$. $\bar{p}_l$ is the power allocated for the $l$th RU for the fronthaul channel, $\sum_{l=1}^{L} \bar{p}_l \leq P_f$. $I$ is defined as the number of groups, i.e., $I=\lceil L/N_c \rceil$.

\vspace{-4mm}

\subsection{RU Grouping for Fronthaul Access}
In RU grouping, we aim to maximize the orthogonality of the channels in a group to reduce the possible interference.  We utilize the chordal distance as the main metric to model the orthogonality between channels. The chordal distance between fronthaul channels of RU $l$ and RU $l'$  is defined as 
$\zeta_{l,l'} = \frac{\left|(\vect{g}^{\mathrm{eq}}_{l})^{\Htran}\vect{g}^{\mathrm{eq}}_{l'}\right|}{\| \vect{g}^{\mathrm{eq}}_{l} \| \| \vect{g}^{\mathrm{eq}}_{l'} \| },$
where $\zeta_{l,l'} \in [0,1]$. Grouping RUs with lower chordal distance corresponds to grouping RUs with higher orthogonality. As a result, possible interference among RUs is minimized.

One way to group RUs is to minimize the maximum sum chordal distance of a group. We let $\alpha_{l,i} \in \{0,1\}$ to denote the membership of RU $l$ in group $i$, and $\boldsymbol{\alpha}_i = [\alpha_{1,i} \ldots \alpha_{L,i}]^{\Ttran}$. We also concatenate all chordal distances into a matrix, $\boldsymbol{\zeta} \in \mathbb{R}^{L\times L}$, where $[\boldsymbol{\zeta}]_{l,l'} = \zeta_{l,l'}$ for $l\neq l'$ and diagonal entries being zero. 
We define a  binary matrix, $\vect{A}_i = \boldsymbol{\alpha}_i\boldsymbol{\alpha}^{\Ttran}_i \in \left\{0, 1\right\}^{L \times L}$ and $[\vect{A}_i]_{l,l'}= a_{l,l',i}= \alpha_{l,i} \alpha_{l',i}$.  The elements of the matrix can be replaced by the following constraints:
\begin{equation}
\begin{aligned}
&a_{l,l',i} \leq \alpha_{l,i}, \hspace{-2.5mm} \quad a_{l,l',i} \leq \alpha_{l',i}, \hspace{-2.5mm} \quad
 a_{l,l',i} \geq \alpha_{l,i}+\alpha_{l',i}-1,
\label{eq:constraint:binary_matrix}
\end{aligned}
\end{equation} 
where $a_{l,l',i}  \in \{0,1\}$. The optimization problem is given as
\begin{subequations} \label{eq:group_opt1:problem}
\begin{align}
 & \underset{\{\alpha_{l,i}, a_{l,l',i} ,\varsigma \}}{\text{minimize}}  \quad \varsigma \label{eq:group_opt1:objective} \\ & \textrm{subject to} \quad \eqref{eq:constraint:binary_matrix} \quad \nonumber \\ &
 \operatorname{Tr}( \boldsymbol{\zeta}\vect{A}_i)
 \leq \varsigma, \quad \forall i,
 \\ &
 \sum_{i=1}^{I} \alpha_{l,i} = 1, \quad \forall l, \quad \sum_{l=1}^{L} \alpha_{l,i} \leq N_c, \quad \forall i,  \label{eq:group_opt:limitgroupsize} \\ &
 \alpha_{l,i}, a_{l,l',i} \in \{0,1\}, \quad \forall l,l',i \label{eq:group_opt1:binary}.
\end{align}
\end{subequations}

The global optimum can be obtained for this problem by using a branch-and-bound algorithm, which we implemented by MOSEK with CVX in MATLAB.

\paragraph*{Wireless Fronthaul Rate Requirement}
Based on the chosen functional split, the required rate on the fronthaul for an RU  changes. We let $O_{\mathcal{X}}$ denote a coefficient effecting the rate requirement for the fronthaul for chosen split $\mathcal{X} \in \{ 8, 7.1, 7.2\}$ (per antenna for $\mathcal{X}\in \{8,7.1\}$ and per UE for $\mathcal{X}=7.2$). They can be formulated by $O_{8} = 2 f_s N_{\mathrm{bits}}$, $O_{7.1} = 2 T^{-1}_s N_{\mathrm{bits}} N_{\mathrm{used}}$, and $O_{7.2} = 2 T^{-1}_s N_{\mathrm{bits}} N_{\mathrm{used}}$, respectively for options 8, 7.1 and 7.2.

\paragraph*{Fronthaul RF chain configuration} To limit the power consumption in the wireless fronthaul, we utilize a fronthaul RF-chain configuration procedure, where, after the RU activation decision is taken, the O-Cloud checks the groups of the active RUs.  If, within any time group, the number of active RUs is fewer than the available RF chains, the O-Cloud deactivates the unused RF chains until only the maximum number of RUs in a group is equal to the number of active RF chains.

\section{Network Power Consumption Model}
\label{sec:power_consumption}
In this section, we will model network power consumption considering the downlink operation.  
For the considered network architecture in Fig. \ref{fig:system_model}, network power consumption can be calculated as 
\begin{equation}
 P_{\mathrm{tot}} = \sum_{l=1}^{L}  (P_{\mathrm{RU},l} + P^{\mathrm{frth}}_{\mathrm{RU},l})  +    P_{\mathrm{Cloud }} + P^{\mathrm{frth}}_{\mathrm{Cloud}}, 
\end{equation}
where $P_{\mathrm{RU},l} $ is the power consumed at RU $l$, and $P_{\mathrm{Cloud }}$ is the power consumed at the O-Cloud  \cite{ozlem_jsac}. $P^{\mathrm{frth}}_{\mathrm{RU},l}$ and $P^{\mathrm{frth}}_{\mathrm{Cloud}}$ are the power consumed at RU $l$ and O-Cloud for the fronthaul, respectively. The power consumption of the backhaul and the core is ignored in this work, since they have a negligible effect compared to the radio and processing \cite{milano}. 

\vspace{-4mm}
\subsection{Power Consumption at the RU-site}
Power consumption at the RU-site can be categorized under two main factors: $(1)$ transmit and hardware power consumption for the access channel; $(2)$ power consumption for processing done at the RU-site (depends on the chosen functional split). The power consumption of RU $l$  becomes 
\begin{equation}
\begin{aligned}
    P_{\mathrm{RU},l} = & P^{\mathrm{hard}}_{\mathrm{RU},l} + P^{\mathrm{proc}}_{\mathrm{RU},l}.
    \end{aligned}
\end{equation}
In hardware power consumption, we constitute both hardware-dependent static power consumption, and the load-dependent total transmit power: 
\begin{align}
&    P^{\mathrm{hard}}_{\mathrm{RU},l} = M_l P_{\mathrm{st}} \nonumber\\
 &   + \Delta^{\rm tr}\cdot\begin{cases} \sum_{k=1}^K\mathbb{E}\left\{\Vert \vect{w}_k\Vert^2\right\}, & \text{for centralized precoding}, \\\sum_{k=1}^{K} \rho_{kl},  & \text{for distributed precoding},\end{cases}
\end{align}
where $P_{\mathrm{st}}$ is the static power consumption per active RF chain and $\Delta^{\rm tr}\geq 1$ is the slope of the load-dependent transmit power consumption.
$P^{\mathrm{RU}}_{\mathrm{proc},l}$ is the power consumption by the processing done at RU $l$, and it depends on the chosen functional split, and is calculated as  
\begin{equation}
P^{\mathrm{proc}}_{\mathrm{RU},l} = \frac{1}{\sigma^{\mathrm{RU}}_{\mathrm{c}}} \left((1-\mathacr{I}_{8}) P^{\mathrm{proc}}_{\mathrm{RU},0} +   \Delta_{r}\frac{ C_{\mathrm{RU},l} }{ C_{\mathrm{RU}}^\mathrm{max}}\right),
\end{equation}
where $P^{\mathrm{proc}}_{\mathrm{RU},0}$ is the idle processing power, and $C_{\mathrm{RU},l}$ is the giga-operations per second (GOPS) at the RU-site. The value of $C_{\mathrm{RU},l}$ can be calculated by summing the processes given in Table \ref{tab:GOPS_table} marked by RU for chosen functional split. $\mathacr{I}_{\mathcal{X}}$ is a binary variable that is equal to one for split option $\mathcal{X}$, and zero for other split options. $ C_{\mathrm{RU}}^\mathrm{max}$ is the processor efficiency at RU in terms of GOPS/W, which can vary based on the chosen hardware technology. $0 < \sigma^{\mathrm{RU}}_{\mathrm{c}} \leq 1$ is the cooling efficiency at any RU. $\Delta_{r}$ is the slope of the load-dependent part.

 \vspace{-3mm}
\subsection{Power Consumption at the O-Cloud-site}
On the O-Cloud-site, the power consumption is given as
\begin{equation}
\label{eq:power_consumption_cloud}
\begin{aligned}
P_{\mathrm{Cloud }}= & P_{\mathrm{fixed}}  +
\frac{1}{\sigma^{\mathrm{Cloud}}_{\mathrm{c}}}\left( P^{\mathrm{proc}}_{\mathrm{Cloud},0} \sum_{\mathacr{w}=1}^{\mathacr{W}} c_{\mathacr{w}}  +  \Delta_{c}\frac{ C_{\mathrm{Cloud}} }{ C^{\mathrm{max}}_{\mathrm{Cld}} }\right),
\end{aligned}    
\end{equation}
where $P_{\rm fixed}$ is the load-independent fixed power consumption, $P^{\mathrm{proc}}_{\mathrm{Cloud},0} $ is the idle processing power, and $C_{\mathrm{Cloud}}$ is the GOPS at the O-Cloud-site that can be calculated by Table \ref{tab:GOPS_table}. $0 < \sigma^{\mathrm{Cloud}}_{\mathrm{c}} \leq 1$ is the cooling efficiency at O-Cloud. $C^{\mathrm{max}}_{\mathrm{Cld}}$ is the processor efficiency at O-Cloud-site in terms of GOPS/W, where based on the chosen hardware technology. $c_{\mathacr{w}} \in \{0,1\}$, where it is equal to one if the $\mathacr{w}$th processor is used, and zero if it is not required. $\mathacr{W}$ denotes the number of processors in the O-Cloud, and if only radio resources are orchestrated, all processors must have idle powers on. With end-to-end resource allocation, processors can share several RU loads, where $\sum_{\mathacr{w}=1}^{\mathacr{W}} c_{\mathacr{w}} = \lceil\frac{ C_{\mathrm{Cloud}} }{ \mathacr{W}C^{\mathrm{max}}_{\mathrm{Cld}} }\rceil$. $\Delta_{c}$ is the slope of the load-dependent part.

 \vspace{-3mm}
\subsection{Power Consumption of Wireless Fronthaul}
We consider the TDMA/SDMA wireless fronthaul access scheme as described in Section \ref{sec:wireless_fronth}; therefore, we consider a single radio at the O-Cloud site. As in \cite{mmWavehybridpower}, the wireless fronthaul power consumption at the O-Cloud site can be calculated by
\begin{equation}
 P^{\mathrm{frth}}_{\mathrm{Cloud}} = \Delta^{\mathrm{fh}} \sum_{l=1}^{L} \bar{p}_l + M_c P_{\mathrm{PA}}  + N_{c} M_c P_{\mathrm{PS}} +  N_{c} (P_{\mathrm{mix}} + P_{\mathrm{DAC}}), 
\end{equation}
where $P_{\mathrm{PA}}$, $P_{\mathrm{PS}}$, $P_{\mathrm{mix}}$  and $P_{\mathrm{DAC}}$ are the power consumption due to power amplifiers, phase shifters, mixers and digital-to-analog converters (DACs), respectively. $\Delta^{\mathrm{fh}}$ is the slope of the fronthaul transmit power.  The power consumption of a single RU for the fronthaul connectivity is modeled by a constant idle power consumption as in the P2P mmWave receivers given by $P^{\mathrm{fronth}}_{\mathrm{RU},l} = P_{\mathrm{ptp}}$ for all active $l$. If an RU is deactivated, the power consumption of its fronthaul radio will be equal to zero. 

\subsection{Network Power Consumption}
Regardless of the chosen functional split, the network power consumption is influenced by the same set of parameters. Depending on the split, the weights of these parameters change. 
The network power consumption is expressed as\footnote{We assume distributed precoding; the power consumption for centralized precoding is obtained similarly by changing the corresponding transmit power. } 
\begin{align}
\label{eq:power_consumption_e2e}
      & P_{\mathrm{tot}} = \bar{P}_{\mathrm{fixed}} + c_0 \sum_{k=1}^{K} \sum_{l=1}^{L} \rho_{kl} + c_1 \sum_{l=1}^{L}  M_l + c_2 \sum_{l=1}^{L} \mathbb{I}(M_l) \nonumber\\ & + c_3\sum_{l=1}^{L}  \sum_{k=1}^{K} \mathbb{I}(\rho_{kl}) + c_4 \sum_{l=1}^{L} M_l \left( \sum_{k=1}^{K} \mathbb{I}(\rho_{kl}) \right)
      + c_5 \sum_{l=1}^{L} \bar{p}_l, 
\end{align}
where $\mathbb{I}(\cdot)$ is the indicator function, which is equal to one if the input of the function is greater than zero, and equal to zero otherwise. $\bar{P}_{\mathrm{fixed}} = {P}_{\mathrm{fixed}} + M_c P_{\mathrm{PA}}  + N_{c} M_c P_{\mathrm{PS}} +  N_{c} (P_{\mathrm{mix}} + P_{\mathrm{DAC}})$ is the fixed power consumption that is ignored in the optimization problem, and later included in simulation results.

The coefficients can be obtained as $c_0 = \Delta^{\rm tr}$ , $c_5 =  \Delta^{\rm fh}$, $c_1 = P_{\mathrm{st}} + \Xi_c C_{\mathrm{mod},l} + [\Xi_c\mathacr{I}_8 + \Xi_r (1- \mathacr{I}_8)] (C_{\mathrm{filter},l} + C_{\mathrm{DFT},l})$, $c_2 = \Xi_c C_{\mathrm{netw},l}  + P_{\mathrm{ptp}} + (1- \mathacr{I}_8) \frac{1}{\sigma^{\mathrm{RU}}_{\mathrm{c}}} P^{\mathrm{proc}}_{\mathrm{RU},0}$, $c_3 = \Xi_c C_{\mathrm{cod},l} + [\Xi_r \mathacr{I}_{7.2} + \Xi_c (1- \mathacr{I}_{7.2})] C_{\mathrm{map},l}$, $c_4 = [\Xi_r \mathacr{I}_{7.2} + \Xi_c (1- \mathacr{I}_{7.2})] C_{\mathrm{prec},l}$. To calculate these coefficients,  GOPS per unit values are used from the Table \ref{tab:GOPS_table}. Scaled processing efficiencies for O-Cloud and an arbitrary RU can be given as  $\Xi_c = (\Delta_c+\mathacr{W}^{-1}P^{\mathrm{proc}}_{\mathrm{Cloud},0})(\sigma^{\mathrm{Cloud}}_{\mathrm{c}}{C^{\mathrm{max}}_{\mathrm{Cld}})}^{-1}$,  $\Xi_r = \Delta_r(\sigma^{\mathrm{RU}}_{\mathrm{c}}C_{\mathrm{RU}}^{\mathrm{max}})^{-1}$, respectively.

\subsection{GOPS Analysis}
3GGP defines several split options that allow the network to carry out some of the PHY functions in the cloud, while others are in the RUs. In this work, we consider split Option 7.1, 7.2, and 8, where the radio frequency (RF) layer and lower PHY operations are carried out at the RUs, while higher PHY processes such as modulation and coding are carried out at the O-Cloud. The GOPS for the operations considered in this work are given in Table \ref{tab:GOPS_table} \cite{Debaillie2015a,malkowsky2017world,desset2016massive,ozlem_jsac}. $\mathrm{W}_r$ and $\mathrm{SE}_r$ denote the ratio of the bandwidth and the ratio of the SE of a UE for this work to the reference setup \cite{Debaillie2015a}. In the reference setup, $20$\,MHz bandwidth is chosen, and the SE is equal to $6$\,bit/s/Hz. The binary variable $r_{il}$ takes the value of $1$ if RU $l$ serves UE $i$ and zero otherwise. $T_s$ is the OFDM symbol duration, $N_{\rm DFT}$ is the DFT size, $N_{\rm used}$  is the number of used subcarriers, and  $f_s$ is the sampling rate.

\begin{table}[tb]
\centering
\caption{GOPS formulations for various operations and their execution locations under different functional splits.}
\vspace{-2mm}
\resizebox{\linewidth}{!}{%
\begin{tabular}{l|l|l|l|l|l}
Function & GOPS per unit* & Factor & 8 & 7.1 & 7.2 \\ \hline
$C_{\mathrm{filter}, l}$ & ${40 f_s}/{10^9}$ & $M_l$ & O-Cloud & RU & RU \\ 
$ C_{\mathrm{DFT}, l}$ & $\frac{8 N_{\mathrm{DFT}} \log_2(N_{\mathrm{DFT}})}{T_s10^9}$ & $M_l$ & O-Cloud & RU & RU \\ 
$C_{\mathrm{map},l}$ & $ 1.3 \mathrm{W}_r \mathrm{SE}^{1.5}_r $ & $\mathacr{R}_l$ $^{\dagger}$ & O-Cloud & O-Cloud & RU \\
$ C_{\mathrm{prec}, l}$ & $\left(\frac{8 \tau_d N_{\mathrm{used}}}{T_s 10^9 \tau_c} \right)$ & $ M_l \mathacr{R}_l$ & O-Cloud & O-Cloud & RU \\    
$C_{\mathrm{mod},l}$ & $ 1.3 \mathrm{W}_r $ & $M_l$ & O-Cloud & O-Cloud & O-Cloud \\
$C_{\mathrm{cod},l}$ & $ 5.2 \mathrm{W}_r \mathrm{SE}_r $ & $ \mathacr{R}_l$ & O-Cloud & O-Cloud & O-Cloud \\
$C_{\mathrm{netw},l}$ & $ 8 \mathrm{W}_r \mathrm{SE}_r $ & $1$ & O-Cloud & O-Cloud & O-Cloud \\ 
\end{tabular}%
}
\footnotesize{*Total GOPS calculated by multiplying GOPS per unit and unit factor.  \\
$^{\dagger}$ $\mathacr{R}_l = \sum_{i=1}^K r_{il}$ is defined for brevity. }
\label{tab:GOPS_table}
\vspace{-6mm}
\end{table}

\vspace{-2mm}
\section{Network Power Minimization for Distributed Precoding}
\label{sec:distributed}
In this section, we will propose an algorithm that minimizes network power consumption considering split option 7.2 and the distributed precoding operation. In this case, the problem can be formulated as
\vspace{-1mm}
{ \begin{subequations} \label{eq:power_min_opt0:problem}
\begin{align}
 & \underset{\{M_l, \rho_{kl}, \bar{p}_l, t_i \}}{\text{minimize}} \quad  c_0 \sum_{k=1}^{K} \sum_{l=1}^{L} \rho_{kl} + c_1 \sum_{l=1}^{L}  M_l + c_2 \sum_{l=1}^{L} \mathbb{I}(M_l) \nonumber\\ & + c_3\sum_{l=1}^{L}  \sum_{k=1}^{K} \mathbb{I}({\rho}_{kl}) + c_4 \sum_{l=1}^{L} M_l \left( \sum_{k=1}^{K} \mathbb{I}({\rho}_{kl}) \right)
      + c_5 \sum_{l=1}^{L} \bar{p}_l \label{eq:power_min_opt0:objective} \\
     & \textrm{subject to} \nonumber \\ &\operatorname{SINR}_k\geq \upsilon_k, \quad \forall k, \label{eq:power_min_opt0:SINR_constraint} \\
 &  t_i B^{\mathrm{frh}} \log_2\left( 1 + {\Lambda_{ll} \bar{p}_l}  \right) \geq    O_{7.2}\sum_{k=1}^{K} \mathbb{I}({\rho}_{kl}), \quad \forall l ,\label{eq:power_min_opt0:wireless_fronthaul_rate} 
 \\&
\sum_{l = 1}^{L} \alpha_{l,i} \bar{p}_l \leq P_f, \quad  \forall i  ,\label{eq:power_min_opt0:wireless_fronthaul_power} 
   \\&
\sum_{i=1}^{I} t_i \leq 1, \label{eq:power_min_opt0:TDMA}   
\\& 
 \sum_{k=1}^{K} \rho_{kl} \leq  P_t, \quad \forall l,\label{eq:power_min_opt0:access_power_limit} \\&
M_l \in \{\tau_{\mathcal{S}_l} +1,\ldots,M^{\mathrm{ac}}\}, \quad \forall l \label{eq:power_min_opt0:integer} .
\end{align}
\end{subequations}}

The objective function, \eqref{eq:power_min_opt0:objective}, is the network total power consumption when the fixed power component is neglected since it does not change with the optimization variables. \eqref{eq:power_min_opt0:SINR_constraint} ensures that the effective SINR at UE $k$ is greater or equal to the threshold value $\upsilon_k$. \eqref{eq:power_min_opt0:wireless_fronthaul_rate} ensures that the fronthaul rate for RU $l$ is higher than or equal to the required fronthaul rate.
\eqref{eq:power_min_opt0:wireless_fronthaul_power} and \eqref{eq:power_min_opt0:access_power_limit} limit the transmit power in the fronthaul and access links, respectively. \eqref{eq:power_min_opt0:TDMA} is the time allocation limit for the TDMA part of the fronthaul channel. \eqref{eq:power_min_opt0:integer} ensures that the number of active antennas at RU $l$, $M_l$ is an integer variable smaller than or equal to the deployed number of antennas at RU $l$. 
\eqref{eq:power_min_opt0:problem} is a non-convex problem with a combinatorial nature due to the integer variables and indicator functions.

To solve this problem, we first relax $M_l$ to a continuous variable  $\hat{M}_l$. For mathematical convenience, we replace  \eqref{eq:power_min_opt0:integer} with $0 \leq \tilde{M}_l \leq  M^{\mathrm{ac}} - \tau_{\mathcal{S}_l}$, where $\tilde{M}_l = \hat{M}_l - \tau_{\mathcal{S}_l}$. Before handling the indicator functions, we reformulate the SINR constraints by utilizing the auxiliary variables $z_{kl} = \sqrt{\tilde{M}_l \rho_{kl}}$, $\vect{z}_k = [ z_{k1}, \ldots, z_{kL}]^{\Ttran}$, and $ \boldsymbol{\bar{\gamma}}_k = [ \sqrt{\gamma_{k1}}, \ldots, \sqrt{\gamma_{kL}}]^{\Ttran}$.  The reformulation of \eqref{eq:power_min_opt0:SINR_constraint} can be given by 
\begin{equation}
    \left \| \begin{bmatrix} &\sqrt{\upsilon_k} \iota_{k1}\boldsymbol{\bar{\gamma}}_k^{\Ttran} \vect{z}_1 \\ & \vdots\\ &\sqrt{\upsilon_k}  \iota_{kK} \boldsymbol{\bar{\gamma}}_k^{\Ttran} \vect{z}_K \\ & \sqrt{\upsilon_k}  (\boldsymbol{\psi}_k \odot \boldsymbol{\bar{\rho}}) \\ &  \sqrt{\upsilon_k\sigma^2} \end{bmatrix} \right \| \leq \boldsymbol{\bar{\gamma}}_k^{\Ttran} \vect{z}_k, \quad \forall k,
    \label{eq:SOC:SINR}
\end{equation}
where $\boldsymbol{\psi}_k = [\psi_{k1}\vect{1}_K^{\Ttran}, \ldots,  \psi_{kL}\vect{1}_K^{\Ttran}]^{\Ttran}$, $\psi_{kl} = \sqrt{\beta_{kl} - \delta_{kl} \gamma_{kl}}$, $\boldsymbol{\bar{\rho}} = [\bar{\rho}_{11}, \ldots, \bar{\rho}_{KL}]^{\Ttran}$, and $\bar{\rho}^2_{kl} = \rho_{kl}$. $\odot$ denotes the Hadamard product. $\iota_{kk'}=1$ if UE $k'$ is in $\mathcal{P}_k \backslash\{k\}$, otherwise it is equal to zero. This is a second-order cone constraint in a convex form. To guarantee the transformation of $z_{kl}$ we introduce 
\begin{equation}
   0  \leq z_{kl} \leq \sqrt{ \tilde{M}_l } \bar{\rho}_{kl}, \quad \forall k,l.
      \label{eq:z_upper_bound}
\end{equation}
We  replace $\mathbb{I}(M_l)$ with a binary variable $m_l\in \{0,1\}$, where $m_l=1$ if $M_l >0$, zero otherwise. Similarly,  $\mathbb{I}(\rho_{kl})$ can be replaced by $r_{kl}\in \{0,1\}, \forall k, l$. The problem becomes
{ 
\begin{subequations} \label{eq:power_min_opt1:problem}
\begin{align}
 & \underset{\{\tilde{M}_l, \bar{\rho}_{kl}, \bar{p}_l, t_i, \vect{z}_k, r_{kl}, m_l \}}{\text{minimize}} \!\!\!\!  c_0 \sum_{l=1}^{L} \sum_{k=1}^{K} \bar{\rho}^2_{kl} + c_1 \sum_{l=1}^{L}  \tilde{M}_l + c_2 \sum_{l=1}^{L} m_l \nonumber\\ & + \sum_{l=1}^{L} \tilde{c}_{3, l}  \sum_{k=1}^{K} r_{kl} + c_4 \sum_{l=1}^{L} \tilde{M}_l  \left( \sum_{k=1}^{K} r_{kl} \right)
      + c_5 \sum_{l=1}^{L} \bar{p}_l \label{eq:power_min_opt1:objective} \\ & \textrm{subject to}  \quad \eqref{eq:power_min_opt0:wireless_fronthaul_power}, \eqref{eq:power_min_opt0:TDMA}, \eqref{eq:SOC:SINR}, \eqref{eq:z_upper_bound} \nonumber\\
      &\bar{\rho}_{kl} \leq  r_{kl} \sqrt{P_t}, \quad  \forall k,l \\
     & B^{\mathrm{frh}} \log_2\left( 1 + {\Lambda_{ll} \bar{p}_l}  \right) \geq   O_{7.2}\left(\frac{\sum_{k=1}^K r^2_{kl}}{ t_i}\right) , \quad \forall l \label{eq:power_min_opt1:wireless_fronthaul_rate}  \\& 
\sum_{k=1}^{K} \bar{\rho}^2_{kl} \leq  P_t ,\quad \forall l ,\label{eq:power_min_opt1:access_power_limit} \\&
m_{l} \leq \tilde{M}_l \leq m_l (M^{\mathrm{ac}} - \tau_{S_l}), \quad \forall l ,\label{eq:power_min_opt1:M_interval} \\&
\sum_{k=1}^K r_{kl} \leq m_l K, \quad \forall l ,\label{eq:power_min_opt1:AP_activation_UE_assoc} \\ &
r_{kl}, m_l \in \{0,1\},  \quad \ \forall k,l, \label{eq:power_min_opt1:binary_variables}
\end{align}
\end{subequations}
}

where $\tilde{c}_{3,l} = c_3 + c_4 \tau_{S_l} $. This problem is non-convex due to $ \tilde{M}_l \left( \sum_{k=1}^{K} r_{kl} \right)$, the binary constraints in \eqref{eq:power_min_opt1:binary_variables}, and the constraint in \eqref{eq:z_upper_bound}, 
 The global optimum for this problem cannot be guaranteed, but an efficient solution can be obtained by adding auxiliary variables that represent the continuous relaxation of the binary variables, separating the binary and continuous variables into different sub-problems, and alternating between these sub-problems.  We first define continuous auxiliary variables, $0 \leq \tilde{r}_{kl}, \tilde{m}_l \leq 1, \forall k,l$, which will replace the binary variables, $r_{kl}, m_l$,  in the constraints. We also define auxiliary variables for power coefficients, $v_{kl}$ and $u_{kl}$, which are used in removing nonconvexities in \eqref{eq:z_upper_bound}. Ideally, we want a final solution to satisfy $\tilde{r}_{kl} = {r}_{kl}$,  $\tilde{m}_{l} = {m}_{l}$, ${u}_{kl} = {\bar{\rho}}^2_{kl}$, ${v}_{kl} = {\bar{\rho}}_{kl}$. Therefore, we add a mean-square-error (MSE) penalty to minimize the error that can be caused by the relaxations.

The first problem with the continuous variables, except the auxiliary variables for power coefficients, can be written as
{ 
\begin{subequations} \label{eq:power_min_opt_subprob1:problem}
\begin{align}
 & \underset{\{ \tilde{M}_l,  \bar{\rho}_{kl}, \bar{p}_l, t_i, \vect{z}_k, \tilde{r}_{kl}, \tilde{m}_l, u_{kl} \}}{\text{minimize}}   c_0 \sum_{l=1}^{L} \sum_{k=1}^{K}  \bar{\rho}^2_{kl} \nonumber\\
 &+ c_1 \sum_{l=1}^{L}  \tilde{M}_l 
   + c_4 \sum_{l=1}^{L} \tilde{M}_l \left( \sum_{k=1}^{K} r_{kl} \right)  + c_5 \sum_{l=1}^{L} \bar{p}_l \nonumber\\ & + \lambda_1 \sum_{l=1}^{L}  \sum_{k=1}^{K} (r_{kl} - \tilde{r}_{kl})^2 +  \lambda_2 \sum_{l=1}^{L} (m_{l} - \tilde{m}_{l})^2\nonumber\\ &  + \lambda_3 \sum_{l=1}^{L} \sum_{k=1}^{K} (u_{kl} - v^2_{kl})^2 + \lambda_4 \sum_{l=1}^{L} \sum_{k=1}^{K} (\bar{\rho}_{kl} - v_{kl})^2\label{eq:power_min_subprob1:objective} \\
 & \textrm{subject to} \quad  \eqref{eq:power_min_opt0:wireless_fronthaul_power}, \eqref{eq:power_min_opt0:TDMA}, \eqref{eq:SOC:SINR}, \eqref{eq:power_min_opt1:access_power_limit}, \nonumber \\
 & \left\| \left[ \sqrt{2} z_{kl}, \tilde{M}_l, u_{kl} \right] \right\| \leq \tilde{M}_l + 
u_{kl}, \quad \forall k,l,
\label{eq:power_min_subprob1:SOC_antenna_rho} \\
 & B^{\mathrm{frh}} \log_2\left( 1 + {\Lambda_{ll} \bar{p}_l}  \right) \geq   O_{7.2}\left(\frac{\sum_{k=1}^K \tilde{r}^2_{kl}}{ t_i}\right) , \quad \forall l, \label{eq:power_min_subprob1:wireless_fronthaul_rate}  \\&
\tilde{m}_l \leq \tilde{M}_l \leq \tilde{m}_l (M^{\mathrm{ac}} - \tau_{S_l}), \quad \forall l, \label{eq:power_min_subprob1:M_interval} \\   &
\sum_{k=1}^K \tilde{r}_{kl} \leq \tilde{m}_l K, \quad \forall l ,\label{eq:power_min_subprob1:AP_activation_UE_assoc} \\&
u_{kl} \leq  \tilde{r}_{kl} P_t, \quad  \forall l,k, \\ &
0 \leq \tilde{r}_{kl}, \tilde{m}_l \leq 1 ,  \quad \ \forall k,l. \label{eq:power_min_subprob1:binary_variables}
\end{align}
\end{subequations}
}
 For given, $r_{kl}$, $m_l$, and $v_{kl}$ values, \eqref{eq:power_min_opt_subprob1:problem} is in a convex form that can be solved by any convex programming solver. 

The second sub-problem will be solved only to find $v_{kl}$, which can be described by
\begin{subequations} \label{eq:power_min_opt1:problem}
\begin{align}
 & \underset{\{ v_{kl} \}}{\text{minimize}} \,   \lambda_3 \sum_{l=1}^{L} \sum_{k=1}^{K} (u_{kl} - v^2_{kl})^2 + \lambda_4 \sum_{l=1}^{L} \sum_{k=1}^{K} (\bar{\rho}_{kl} - v_{kl})^2.  
      \label{eq:power_min_opt1:objective} 
\end{align}
\end{subequations}
This problem serves two main purposes: (1) $\bar{\rho}^2_{kl} \rightarrow u_{kl}$ with the help of $v_{kl}$, and (2) facilitate solving the first-problem jointly for $\tilde{M}_l$, $\bar{\rho}_{kl}$. The solution for this problem is the positive real root of the following equation:
\begin{equation}
    4\lambda_3 v^3_{kl} - (4\lambda_3 u_{kl} - 2\lambda_4)v_{kl} - 2\lambda_4\bar{\rho}_{kl} = 0.
    \label{eq:subproblem2:solution}
\end{equation}

Finally, the third sub-problem is solved for the binary variables as 
\begin{subequations} \label{eq:subproblem3:problem}
\begin{align}
 & \underset{\{ r_{kl}, m_l \}}{\text{minimize}} \,   c_2 \sum_{l=1}^{L} m_l  + \sum_{l=1}^{L}  \tilde{c}_{3,l}\sum_{k=1}^{K} r_{kl} + c_4 \sum_{l=1}^{L} \tilde{M}_l \left( \sum_{k=1}^{K} r_{kl} \right) \\ & +  \lambda_1 \sum_{l=1}^{L}  \sum_{k=1}^{K} (r_{kl} - \tilde{r}_{kl})^2 +  \lambda_2 \sum_{l=1}^{L} (m_{l} - \tilde{m}_{l})^2
      \label{eq:subproblem3:objective} \\ & \textrm{subject to} \nonumber \\ & 
r_{kl}, m_l \in \{0,1\},  \quad \ \forall l,k. \label{eq:subproblem3:binary_variables}
\end{align}
\end{subequations}
Since only variables in this problem are the binary ones, the optimal solution of this sub-problem can be obtained by just checking the coefficients of these variables:
\begin{subequations} \label{eq:subproblem3:solution}
\begin{align}
& m_{l} = 0.5-0.5\cdot\mathrm{sign}\left(c_2 + \lambda_2 (1- 2\tilde{m}_l)\right)
 , \quad \forall l , \\ & r_{kl} = 0.5-0.5\cdot\mathrm{sign}\left( (\tilde{c}_{3,l} + c_4 \tilde{M}_l) + \lambda_1 (1- 2\tilde{r}_{kl}) \right) \quad \forall k,l. 
\end{align}
\end{subequations}
This approach to separating variables simplifies the problem considerably by eliminating the integer optimization.

  \begin{algorithm}[t]
	 \caption{{Block coordinate descent E2E power minimization for distributed precoding}} \label{alg:block_coordinate_descent}
  	\begin{algorithmic}[1]
                \STATE {\bf Input:}  $c_n$, $n \in {0,1,\ldots,5}$, $\lambda_b$, $b \in {1,\ldots,4}$, $\upsilon_k$, $O_{7.2}$, $a_{l,i}$.
 		\STATE {\bf Initialization:} Initialize $v^{(0)}_{kl}$ and $r^{(0)}_{kl}$ randomly, $m^{(0)}_l = 1$.  Set the iteration counter to $c=0$. Set the approximation accuracy to $\epsilon>0$.
  		\WHILE{ $\mathrm{NMSE} > \epsilon$   }
               \STATE $c \gets c+1$
  		    \STATE Solve \eqref{eq:power_min_opt_subprob1:problem}  with a convex programming solver.
             \STATE Set $ \tilde{M}^{(c)}_l,  \bar{\rho}^{(c)}_{kl}, \bar{p}^{(c)}_l, t^{(c)}_i, \vect{z}^{(c)}_k, \tilde{r}^{(c)}_{kl}, \tilde{m}^{(c)}_l, u^{(c)}_{kl}$ to the solution of \eqref{eq:power_min_opt_subprob1:problem}. 
        \STATE Set $v^{(c)}_{kl}$ to the positive real root of \eqref{eq:subproblem2:solution} for given $\bar{\rho}^{(c)}_{kl}, u^{(c)}_{kl}$.
        \STATE Update  $ r^{(c)}_{kl}, m^{(c)}_l$ based on \eqref{eq:subproblem3:solution}. 
       \STATE Update $\mathrm{NMSE}$. 
	 	\ENDWHILE
  	    \end{algorithmic}
  \end{algorithm}
  
The overall algorithm is described in Algorithm~\ref{alg:block_coordinate_descent}. To ensure a feasible initialization, the algorithm starts by activating all RUs, and with random values of $v_{kl}$. It first solves the sub-problem for continuous variables, \eqref{eq:power_min_opt_subprob1:problem}, then the second sub-problem for  $v_{kl}$, \eqref{eq:subproblem2:solution},  and finally does the binary updates, \eqref{eq:subproblem3:solution}. Then we update the starting point and iterate over all sub-problems until convergence. After convergence, we apply a post-processing procedure to efficiently obtain integer values of $M_l$ from continuous values of $\hat{M}_l=\tilde{M}_l+\tau_{\mathcal{S}_l}$. 

\vspace{-3mm}
\section{Network Power Minimization for Centralized Precoding}
\label{sec:centralized}
In this section, we will propose a network power minimization algorithm considering centralized precoding with split options 8 and 7.1. Due to the centralized precoding, the effective SINR expression is fundamentally different than the distributed precoding scheme as given in \eqref{eq:SINR_centralized}. The lack of a closed-form expression for the effective SINR under centralized precoding prohibits using optimization algorithms with the well-known linear precoding schemes, such as P-MMSE or P-RZF. Therefore, instead of directly injecting the precoding vectors as in the previous section, we will propose a novel approach based on scenario-sampling approximation.

\subsection{Scenario Sampling Approximation}
Scenario sampling is a robust optimization method that is useful when there are probabilistic guarantees in the optimization problem, and the probability distribution function of the random variable is either intractable or expensive to calculate \cite{scenario_sampling}. A relevant version of such a guarantee for our work is the expectation of the optimization variable with respect to a random variable. We let $\boldsymbol{\omega}$ be a random vector having a support $\boldsymbol{\Omega}$, and let the probabilistic constraint be $\mathbb{E}\left\{f(\vect{x}, \boldsymbol{\omega} \right)\} \leq b$, where $f(\mathbf{x}, \boldsymbol{\omega}) :  \mathbf{X} \times \boldsymbol{\Omega} \rightarrow \mathbb{R}$ is a function such that $\mathbb{E}\left\{f(\vect{x}, \boldsymbol{\omega}) \right\}$ is well-defined at all $\vect{x}$. With scenario sampling, $T$ random samples of the random vector $\boldsymbol{\omega}$ is generated, where $\boldsymbol{\omega}_{\vartheta}$ denotes the $\vartheta$th random realization. Then, the expectation is approximated by the sample average, $\mathbb{E}\left\{f(\mathbf{x} , \boldsymbol{\omega}) \right\} \approx \frac{1}{T}\sum_{\vartheta=1}^{T} f(\mathbf{x}, \boldsymbol{\omega}_\vartheta) \leq b$, where the approximation becomes equality as $T\rightarrow \infty$ \cite{scenario_sampling}.  By applying the scenario sampling approach to  \eqref{eq:SINR_centralized}, based on the known channel statistics such as path loss and covariance matrix of the channel estimation error,  we can first create random samples of the estimated channel, $\hat{\mathbf{h}}_k$, and precoding vector, $\mathbf{w}_k$, with ${\hat{\mathbf{h}}}_{k,\vartheta}$, and $\mathbf{w}_{k,\vartheta}$, respectively. The effective SINR can be formulated as in \eqref{eq:SINR_centralized_scenario}. 

\textbf{Remark:} \textit{The samples in \eqref{eq:SINR_centralized_scenario} do not represent actual channel estimates or precoders. Since antenna activation/deactivation decisions taken in O-Cloud are based on long-term statistics (assuming the channel distribution is known and remains stable for several seconds), the samples are drawn from this distribution and discarded after the decision is made.  }

\begin{figure*}[t!]
\begin{equation}
    \mathrm{SINR}_k = \frac{T^{-2}\left|\sum_{\vartheta=1}^T(\hat{\mathbf{h}}_{k,\vartheta}^{\Htran} \mathbf{w}_{k,\vartheta}) \right|^2}{T^{-1} \sum_{i=1}^K \sum_{\vartheta=1}^T \left|  \hat{\mathbf{h}}_{k,\vartheta}^{\Htran}  \mathbf{w}_{i,\vartheta} \right|^2   + T^{-1}\sum_{i=1}^K  \sum_{\vartheta=1}^T     \mathbf{w}^{\Htran}_{i,\vartheta} \tilde{\vect{C}}_k \mathbf{w}_{i,\vartheta} -T^{-2}\left|\sum_{\vartheta=1}^T(\hat{\mathbf{h}}_{k,\vartheta}^{\Htran}  \mathbf{w}_{k,\vartheta})\right|^2+\sigma^2}.
    \label{eq:SINR_centralized_scenario}
\end{equation}
\hrulefill
\end{figure*}
The effective SINR is still in terms of precoding vectors, not as a function of the number of antennas or the power allocation coefficients. Therefore, we will reformulate our original problem as if it were a precoding optimization problem. In this case, $\vect{W} \in \mathbb{C}^{T\times K \times M^{\mathrm{ac}} \times L}$ is a four-dimensional precoding array, denoting the precoding vectors for all RUs, UEs, and at all samples. For example, the precoding vector for the UE $k$ at sample $\vartheta$ from RU $l$ is denoted by $\vect{W}_{\vartheta,k,:,l} \in \mathbb{C}^{M^{\mathrm{ac}}}$. The network power minimization problem for centralized precoding can be given as follows:

{ 
\begin{subequations} \label{eq:centralized_power_min_opt0:problem}
\begin{align}
 & \underset{\{\vect{W}, \bar{p}_l, t_i, b_{l,m} \}}{\text{minimize}} \quad \frac{c_0}{T}   \sum_{l=1}^{L} \|\operatorname{vec}(\vect{W}_{:,:,:,l})\|_2^2 + c_1 \sum_{l=1}^{L} \sum_{m=1}^{M^{\rm ac}} b_{l,m}  \nonumber\\ & + c_2 \sum_{l=1}^{L} \mathbb{I}\left(\sum_{m=1}^{M^{\rm ac}} b_{l,m}\right) + c_3\sum_{l=1}^{L}  \sum_{k=1}^{K} \mathbb{I}(\|\vect{W}_{:,k,:,l}\|_F) 
  \nonumber\\ & + c_4 \sum_{l=1}^{L} \sum_{m=1}^{M^{\rm ac}} b_{l,m}  \left( \sum_{k=1}^{K} \mathbb{I}(\|\vect{W}_{:,k,:,l}\|_F) \right)   + c_5 \sum_{l=1}^{L} \bar{p}_l \label{eq:centralized_power_min_opt0:objective} \\
     & \textrm{subject to} \nonumber \\ &\operatorname{SINR}_k\geq \upsilon_k, \quad \forall k, \label{eq:centralized_power_min_opt0:SINR_constraint} \\
 &  t_i B^{\mathrm{frh}} \log_2\left( 1 + {\Lambda_{ll} \bar{p}_l}  \right) \geq    O_{\mathcal{X}} \sum_{m=1}^{M^{\rm ac}} b^2_{l,m} , \quad \forall l ,\label{eq:centralized_power_min_opt0:wireless_fronthaul_rate} 
 \\&
\sum_{l = 1}^{L} \alpha_{l,i} \bar{p}_l \leq P_f, \quad  \forall i,  \label{eq:centralized_power_min_opt0:wireless_fronthaul_power} 
   \\&
\sum_{i=1}^{I} t_i \leq 1, \label{eq:centralized_power_min_opt0:TDMA}   
\\& 
   \| \vect{W}_{:,:,m,l} \|_F^2 \leq  T P_t b_{l,m} \quad \forall l,m,\label{eq:centralized_power_min_opt0:antenna_limit} \\&   \|\operatorname{vec}(\vect{W}_{:,:,:,l})\|_2^2 \leq TP_{t}, \quad \forall l,\label{eq:centralized_power_min_opt0:access_power_limit}  \\&
b_{l,m} \in \{0,1\}, \quad \forall l,m, \label{eq:centralized_power_min_opt0:integer}
\end{align}
\end{subequations} 
}

where \eqref{eq:centralized_power_min_opt0:objective} is the reformulated network power consumption minimization objective function. $b_{l,m}$ denotes a binary variable of the activation of $m$th antenna of the $l$th RU. \eqref{eq:centralized_power_min_opt0:SINR_constraint} is the effective SINR constraint for the UEs, where \eqref{eq:SINR_centralized_scenario} should be plugged in. \eqref{eq:centralized_power_min_opt0:wireless_fronthaul_rate} is the fronthaul rate constraint, where $O_{\mathcal{X}}$ is the scalar rate factor determined by the chosen functional split $\mathcal{X}\in \{ 7.1, 8\}$. \eqref{eq:centralized_power_min_opt0:wireless_fronthaul_power} and \eqref{eq:centralized_power_min_opt0:TDMA} are the fronthaul power and time allocation limits. \eqref{eq:centralized_power_min_opt0:antenna_limit} can be interpreted as a big-M constraint, assigning zero to the precoding vectors for the deactivated antennas. \eqref{eq:centralized_power_min_opt0:access_power_limit} is the transmit power limitation, and \eqref{eq:centralized_power_min_opt0:integer} ensures that the activation variables are binary. \eqref{eq:centralized_power_min_opt0:problem} is non-convex due to the objective function, binary variables, and the non-convex formulation of SINR as given in \eqref{eq:SINR_centralized_scenario}. 
 We will first reformulate the SINR given in \eqref{eq:SINR_centralized_scenario} to obtain a convex form. We denote $\vect{w}_{k,\vartheta} = \operatorname{vec}(\vect{W}_{\vartheta, k, :, :})$, and $\bar{\vect{w}}_k = \left[ \vect{w}^{\Ttran}_{k,1}, \ldots, \vect{w}^{\Ttran}_{k,T} \right]^{\Ttran}$ denote the concatenated precoding vectors. The  vectors $\bar{\vect{h}}_{k}$ are defined as $\bar{\vect{h}}_{k}=\left[\hat{\vect{h}}_{k,1}^{\Ttran},\ldots, \hat{\vect{h}}_{k,T}^{\Ttran}\right]^{\Ttran}$. The SINR constraints in \eqref{eq:centralized_power_min_opt0:SINR_constraint} can be expressed in second-order cone form as in 
 
{ \begin{equation}
     \left \| \begin{bmatrix}  \hat{\mathbf{h}}_{k,1}^{\Htran}  \mathbf{w}_{1,1} \\  \vdots\\  \hat{\mathbf{h}}_{k,T}^{\Htran}  \mathbf{w}_{1,T}  \\  \vdots\\   \hat{\mathbf{h}}_{k,T}^{\Htran}  \mathbf{w}_{K,T} \\   \tilde{\vect{C}}^{1/2}_k \mathbf{w}_{1,1}  \\  \vdots\\   
 \tilde{\vect{C}}^{1/2}_k \mathbf{w}_{K,T} \\  \sqrt{T} \sigma \end{bmatrix} \right \| \leq \sqrt{\frac{\upsilon_k +1}{ \upsilon_k T} }\operatorname{Re}\left(\bar{\vect{h}}^{\Htran}_k  \bar{\vect{w}}_k\right), \quad \forall k.
 \label{eq:SOC_centralized_SINR}
\end{equation}}

Although \eqref{eq:SOC_centralized_SINR} convexifies the SINR constraints through reformulation, the binary variables require relaxation of the original problem, resulting in a loss of global optimality. In the following, we propose an efficient methodology to address these non-convexities. 

\subsection{Group Sparsity-Based Energy-Efficient Precoding Optimization}
The objective function in \eqref{eq:centralized_power_min_opt0:objective} promotes sparsity by minimizing the activated RUs, active antennas at each RU, and RU-UE associations. The remaining terms also ensure reducing the transmit power and limiting the combination of sparse elements. Instead of using many binary variables, this problem can be reformulated using group sparsity methods, which promote sparsity among groups. One effective approach is Group Lasso, where the objective function can be formulated simply as a sum of norm-2 of the groups \cite{Bach2012Sparsity}. In this way, the objective aims to minimize the number of active groups while effectively setting all elements in a deactivated group to zero. To further promote sparsity, we will also utilize iterative $l_1$ minimization \cite[Section~7]{Bach2012Sparsity}.

In the objective function, the terms related to the antenna number and RU activation are dominant compared to the other parameters. Furthermore, they reduce the other terms since UEs have fewer RUs to associate with. We reformulate the objective function by ignoring the cross-terms (terms with $c_4$ coefficient) and RU-UE association (terms with $c_3$ coefficient). We define $\tilde{b}_{l,m} \in [0,1]$ to represent the continuous versions of the binary variables $b_{l,m}$. The continuous problem in an iteration $c$ can be described by 
\begin{subequations} \label{eq:centralized_power_min_opt1:problem}
\begin{align}
 & \underset{\{\vect{W}, \bar{p}_l, t_i, \tilde{b}_{l,m} \}}{\text{minimize}} \quad \frac{c_0}{T}   \sum_{l=1}^{L} \|\operatorname{vec}(\vect{W}_{:,:,:,l})\|_2^2   + c_5 \sum_{l=1}^{L}   \bar{p}_l  \nonumber\\ &  + (c_1 + c_2) \sum_{l=1}^{L} \sum_{m=1}^{M^{\rm ac}}  \eta_{l,m}^{(c)} \tilde{b}_{l,m}    \label{eq:centralized_power_min_opt1:objective}  
 \end{align}
 \begin{align}
      & \textrm{subject to}  \nonumber \\ & \eqref{eq:SOC_centralized_SINR}, \eqref{eq:centralized_power_min_opt0:wireless_fronthaul_power}, \eqref{eq:centralized_power_min_opt0:TDMA}, \eqref{eq:centralized_power_min_opt0:access_power_limit}, \nonumber \\ 
 &   B^{\mathrm{frh}} \log_2\left( 1 + {\Lambda_{ll} \bar{p}_l}  \right) \geq    O_{\mathcal{X}} \frac{\sum_{m=1}^{M^{\rm ac}} \tilde{b}^2_{l,m}}{t_i} , \quad \forall l, \label{eq:centralized_power_min_opt1:wireless_fronthaul_rate} 
 \\&
   \| \vect{W}_{:,:,m,l} \|_F \leq  \sqrt{TP_t} \tilde{b}_{l,m} \quad \forall l,m,\label{eq:centralized_power_min_opt1:antenna_limit} \\& 
   \tilde{b}_{l,m} \leq b^{(c)}_{l,m}, \quad \forall l,m. \label{eq:centralized_power_min_opt1:binary}
\end{align}
\end{subequations}
The group sparsity in \eqref{eq:centralized_power_min_opt1:objective} is obtained by the summation of the $\tilde{b}_{l,m}$ variables with the weights $\eta^{(c)}_{l,m} = \frac{1}{\tilde{b}^{(c-1)}_{l,m} + \varrho}$. $\varrho$ is an arbitrarily small positive number, and the weights promote the activation variables with small values to be equal to zero. In this way, the remaining activation values are also indirectly pushed to get higher values to guarantee SINR constraints. $b^{(c)}_{l,m}$ in \eqref{eq:centralized_power_min_opt1:binary}  is a binary value that results from thresholding the activation solution in the previous iteration as $b^{(c)}_{l,m} = 1- \mathbb{I}(\tilde{b}^{(c-1)}_{l,m} - \varepsilon)$, where $\varepsilon$ is the threshold value. The thresholding in \eqref{eq:centralized_power_min_opt1:binary} enforces small-valued $b^{(c)}_{l,m}$ to be equal to zero, consequently enforces $W_{:,:,m,l}$ to be zero as well through \eqref{eq:centralized_power_min_opt1:antenna_limit}, and preventing incorrect satisfaction of \eqref{eq:SOC_centralized_SINR}. Thresholding also removes the deactivated antennas from the set in the following iteration, enforcing the objective to deactivate more antennas. The overall algorithm for the centralized precoding (split option 8 and 7.1) is given in Algorithm~\ref{alg:centralized}. Note that Algorithm~\ref{alg:centralized} requires a much smaller number of iterations compared to Algorithm~\ref{alg:block_coordinate_descent}, but the problem size in each iteration is much larger due to the sampling approach. The details of the parameter settings are given in Section \ref{sec:numerical_analysis}.

  \begin{algorithm}[tb]
	 \caption{{Iterative $l1$-based sparsity inducing  E2E power minimization for centralized precoding}} \label{alg:centralized}
  	\begin{algorithmic}[1]
                \STATE {\bf Input:} $\beta_{kl},\gamma_{kl}$, $\vect{\tilde{C}}_k, \upsilon_k, \forall k$,  $O_{7.1}$ or $O_{8}$, $T$, $\epsilon$, $\varepsilon$, $\varrho$.
        \STATE {\bf Sampling:} Create $\hat{\vect{h}}_{k,\vartheta}$ for all $\vartheta, k$. 
 		\STATE {\bf Initialization:} Initialize $\tilde{b}^{(0)}_{l,m}= 1$ for all $l,m$.  Set the iteration counter to $c=0$. Set the approximation accuracy to $\epsilon>0$.
  	\WHILE{ $\mathrm{NMSE} > \epsilon$   }
     \STATE $c \gets c+1$
  		    \STATE Solve \eqref{eq:centralized_power_min_opt1:problem}  with a convex programming solver.
             \STATE Set $\eta^{(c)}_{l,m} = (\tilde{b}^{(c-1)}_{l,m} + \varrho)^{-1}$ and   $b^{(c)}_{l,m} = 1- \mathbb{I}(\tilde{b}^{(c-1)}_{l,m} - \varepsilon)$.
           
            \STATE Update $\mathrm{NMSE}$.
        \ENDWHILE    
         \STATE $M_l = \sum_{m=1}^{M^{\rm ac}} b^{(c)}_{l,m}$, for all $l$.
  	    \end{algorithmic}
  \end{algorithm}

\section{Simulation Results}
\label{sec:numerical_analysis}
\begin{table}[tb!]
\caption{Simulation parameters}
\begin{tabular}{|l|l|l|l|}
\hline
$M^{\rm frh}$,  $M_{c}$ & 64, 256 & $N_{\mathrm{bits}}$ & 12 \\ \hline
$f_s$, $B^{\rm ac}$, $B^{\rm frh}$ & $122.88$, $100$, $1000$\,MHz   &  $T_s$ & $35.68\,\mu$s \\ \hline
$P_t$, $P_f$, pilot pow.  & $5$, $20$, $0.5$\,W  & $P_{\mathrm{fixed}}$  & $120$\,W  \\ \hline
$\sigma^{\mathrm{Cloud}}_{\mathrm{c}}$, $\sigma^{\mathrm{RU}}_{\mathrm{c}}$     & $0.9$, $1$  & $\tau_c$, $\tau_p$  &  $260$, $6$  \\ \hline
$C_{\mathrm{Cld}}^{\max }$, $C^{\max }_{\mathrm{RU}}$ & $360$, $180$ GOPS &  $P_{\mathrm{st}}$ & $6.8$\,W \\ \hline
$\Delta^{r}, \Delta^{c}$ & $74$\,W & $P_{\mathrm{OLT}}$ & $20$\,W \\ \hline
$N_{\mathrm{DFT}}$,$N_{\mathrm{used}}$ & 4096, 2667 & $P_{\mathrm{ptp}}$ & $35$\,W  \\ \hline
$P^{\mathrm{proc}}_{\mathrm{RU},0}$, $P^{\mathrm{proc}}_{\mathrm{Cloud},0}$  & $20.8$\,W  &$\Delta^{\mathrm{tr}}$, $\Delta^{\mathrm{fh}}$  & $4$ \\ \hline
$P_{\mathrm{PA}}$, $P_{\mathrm{PS}}$, $P_{\mathrm{mix}}$ & $25$, $75$, $1000$\,mW & $P_{\mathrm{DAC}}$  & $3.8$\,W \\ \hline
\end{tabular}
\label{tab:simulation_params}
\vspace{-2mm}
\end{table}
We consider a square area of size  $1 \times 1~\text{km}^2$ with a grid-type RU deployment, where the O-Cloud is located at the center of the area. In the fronthaul, we consider O-Cloud and RUs are equipped with a uniform circular array (UCA) and uniform linear arrays (ULAs), respectively.   If not specified, we consider $L=16$ and $M^{\rm ac} = 8$. We consider $3$\,GHz and $28$\,GHz carrier frequency for the access and fronthaul links, respectively. While the fronthaul channel is LOS dominant, uncorrelated Rayleigh fading is assumed in the access channel. The shadowing effect in the access channel is modeled as in \cite{ozlem_jsac}.  The SE requirement of UEs is set to $2$ bit/s/Hz. We consider 5G and beyond access channel properties, as given in Table \ref{tab:simulation_params}. The optical fronthaul and processing values are taken from \cite{ozlem_jsac}, while the wireless fronthaul parameters are taken from \cite{mmWavehybridpower}. The UEs are distributed uniformly in the considered area. We run $50$ Monte Carlo simulations and take the average of the performance results.

\vspace{-2mm}
\subsection{Algorithm convergence and sensitivity}
In this part, we explain the convergence and sensitivity properties. The target $\mathrm{NMSE} = 10^{-4}$ is obtained with $55$ iterations for  Algorithm~\ref{alg:block_coordinate_descent} and with  $5$ iterations  for Algorithm \ref{alg:centralized}. We chose $T=25$ as the sample size to approximate the expectation for Algorithm \ref{alg:centralized}, since the average SE difference between $25$ and $1000$ samples is lower than $10^{-2}$. After implementing Algorithm \ref{alg:centralized}, O-Cloud decides which antennas should be on and utilizes a linear precoding scheme of choice instead of directly using the sparse precoding vectors. The lowest SE among UEs using PMMSE precoding is $1.84$\,bit/s/Hz on average among different setups, while using PRZF precoding, one can obtain $1.91$\,bit/s/Hz. Both schemes are very close to the targeted $2$\,bit/s/Hz, highlighting the applicability of the proposed methodology.

\vspace{-3mm}
\subsection{Power consumption vs functional splits}
\begin{figure}
    \centering
 \subfigure[]{ \label{fig:KcomparisonWF}
	\includegraphics[width=\linewidth]{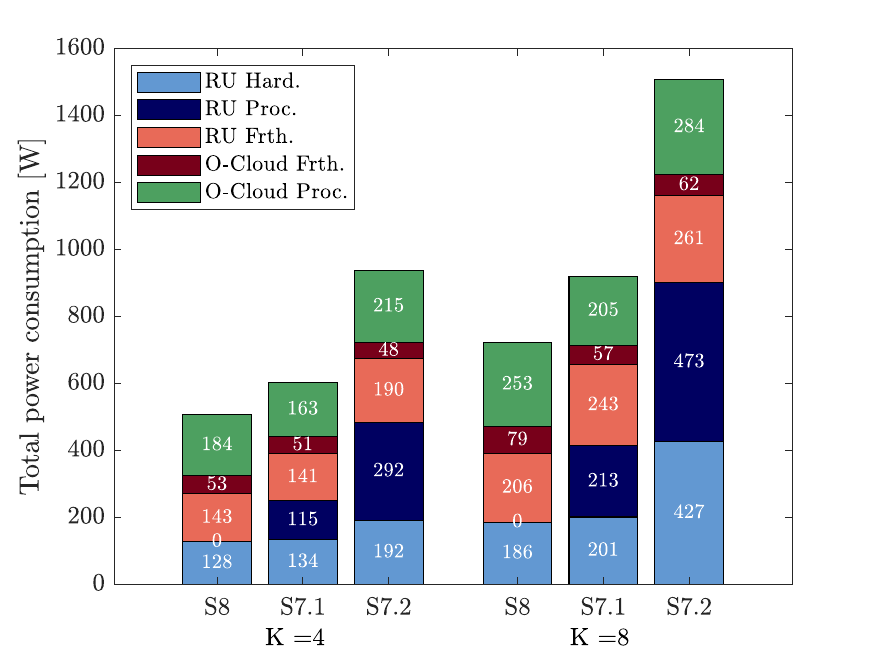} }
		\hspace{-.5cm} 

\subfigure[]{ 
			\label{fig:KcomparisonOF}
	        \vspace{-2mm}
\includegraphics[width=\linewidth]{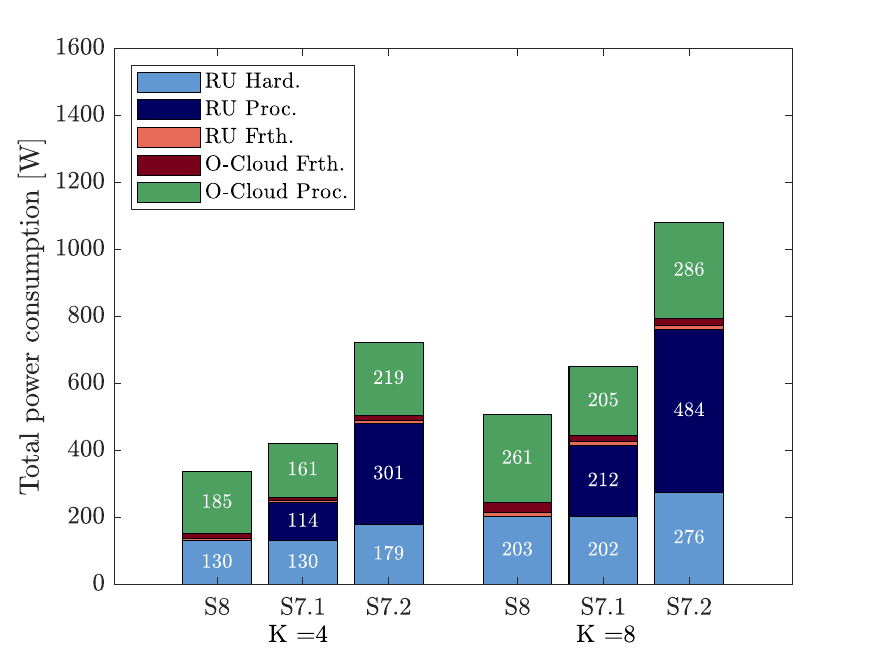} }
    \vspace{-2mm}
    \caption{Total network power consumption for different split options under different UE traffics considering (a) wireless fronthaul, (b) optical fronthaul.}
    \label{fig:Kcomparison}
    \vspace{-3mm}
\end{figure}

Fig. \ref{fig:Kcomparison} compares the detailed network power consumption of different functional split options under different UE densities and different fronthaul transport technologies (Fig. \ref{fig:KcomparisonWF} considers wireless fronthaul, and Fig. \ref{fig:KcomparisonOF} considers optical fronthaul). Below, we explain the power consumption trends for each network component in depth for different functional split options. 

\textit{a) RU hardware:} In Fig. \ref{fig:KcomparisonWF}, while split options 8 and 7.1 are approximately equal in RU hardware power consumption, option 7.2 consumes significantly more power due to the use of distributed precoding. This scheme requires activating a greater number of antennas to meet the same SE requirement as the centralized precoding employed in options 8 and 7.1.

\textit{b) RU processing:} The power consumption of RU processing is mainly influenced by the chosen functional split. While in option 8, all processing is done in the O-Cloud, in option 7.2, all lower-layer processing, including precoding, is done in the RU, naturally increasing the power consumption in the RUs. The RU processing power is also indirectly affected by the UE density, where more UEs result in more RUs with more antennas to be activated. 

\textit{c) RU fronthaul:}  RU fronthaul power mainly depends on the active number of RUs. As shown in the figure, the fronthaul power increases with higher functional split options, indicating that more RUs are active at higher splits. While this trend is expected for option 7.2, options 8 and 7.1 are anticipated to activate a similar number of RUs since both employ centralized precoding. The difference between option 8 and 7.1 arises from the sparseness of the applied algorithm rather than the deployment configuration.  Specifically, stricter fronthaul rate constraint in Algorithm \ref{alg:centralized} promotes a sparser solution for option 8, thereby reducing the number of active RUs. 

\textit{d) O-Cloud fronthaul:} The wireless fronthaul limitation creates an opposite trend in the O-Cloud fronthaul power consumption, where option 8 consumes more power compared to option 7.1, especially under the $K=8$ case. Although a similar number of RUs are activated in both options, due to the higher wireless fronthaul rate requirement in option 8, O-Cloud is required to provide higher multiplexing gains by activating more  RF chains and using more transmit power in the fronthaul. 

\textit{e) O-Cloud processing:} The highest processing power consumption at O-Cloud is expected to be in option 8, since all physical layer operations are carried out in O-Cloud. However, the results demonstrate that option 7.2 results in both higher processing power in the O-Cloud and in the RU-site. Since distributed precoding requires more RUs and more antennas to be activated, the total amount of processing required gets significantly higher, resulting in higher power consumption. 

Similar comments in each component also apply to the optical fronthaul case in Fig. \ref{fig:KcomparisonOF}. Since the optical fronthaul lifts the strict fronthaul rate constraints, options 8 and 7.1 activate a similar number of RUs and antennas, following predictable trends. The total power consumption of the fiber fronthaul is significantly lower than that of the wireless fronthaul, demonstrating the tradeoff between the deployment and operational costs of a network. 

Overall, the results demonstrate significant energy-saving benefits of the functional split option 8 and option 7.1 compared to option 7.2, largely thanks to the performance improvement achieved through centralized precoding. Although centralized precoding requires higher complexity (and thus higher processing power consumption), centralizing the processing in the O-Cloud reduces idle and local processing power at the RU-sites, eventually reducing the total energy consumption even further. While option 7.1 exhibits slightly higher power consumption than option 8, its reduced fronthaul rate requirement makes it a more practical choice for wireless fronthaul scenarios. In contrast, for optical fronthaul deployments where bandwidth constraints are less critical, option 8 emerges as the most ideal functional split.

\subsection{Comparison with benchmark orchestration schemes}

\begin{figure}
    \centering
\label{fig:DeploymentWF}
\includegraphics[width=\linewidth]{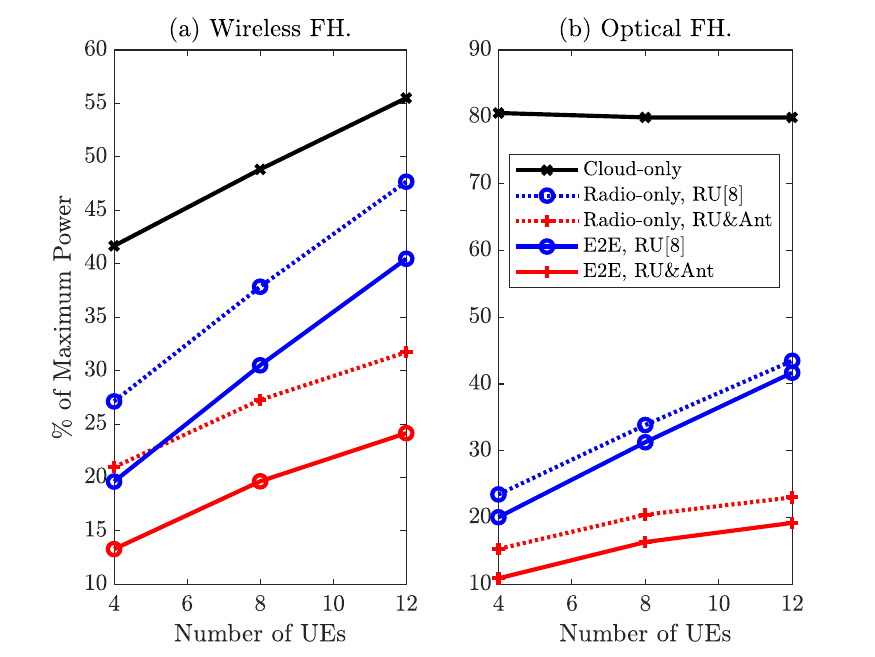} 
\vspace{-4mm}
    \caption{Percentage of power consumption of different algorithms relative to the maximum network power consumption under split 7.1, considering (a) wireless fronthaul, (b) optical fronthaul.  }
    \label{fig:BenchmarkLoad} 
    \vspace{-2mm}
\end{figure}

As shown in Fig. \ref{fig:BenchmarkLoad}, three existing benchmark algorithms and a radio-only resource orchestration version of the proposed algorithm are implemented. All benchmarks target minimizing the considered power consumption, while guaranteeing $2 \text{ bits/s/Hz}$ for the UEs.  Cloud-only orchestration is when the RUs and O-Cloud are unaware of each other. While O-Cloud shares the processing resources as in  \eqref{eq:power_consumption_cloud} in this method, the radio site only minimizes the transmit power. In contrast, in the radio-only orchestration, radio-site power consumption is targeted to be minimized either by shutting down or activating the RUs (as in \cite{ozlem_jsac}), or by configuring each antenna element, while all idle processors and RF-chains in the cloud are active under the wireless fronthaul. E2E orchestration considers cloud and radio processing resources that are jointly orchestrated to minimize the end-to-end power consumption given in \eqref{eq:power_consumption_e2e}. Since the benchmark algorithms in the original works do not consider wireless fronthaul constraints, we adapted these algorithms, transforming them to fit the current problem structure.
  
  As Fig. \ref{fig:BenchmarkLoad} illustrates, the proposed algorithm can reduce the power consumption of the network $87\%$ under the low-load and $76\%$ under the high-load scenarios. Cloud-only orchestration performs the worst compared to the other methodologies, demonstrating the need to deactivate radio resources. Since the fronthaul is limited under the wireless fronthaul, the cloud-only orchestration deactivates unnecessary RUs and fronthaul parts, lowering power consumption more compared to the optical fronthaul scenario. Radio-only orchestration reduces power consumption further $10-20\%$ under wireless fronthaul, and $60\%$ in the optical fronthaul. However, the active RF-chains in O-Cloud for the fronthaul link, and the always-on idle processors, limit the energy-savings compared to end-to-end resource orchestration. As the figure shows, end-to-end orchestration provides $10\%$ further energy-savings, reducing the total power consumption to $13\%$ of the case when all network resources are on. The proposed algorithm provides $10\%$ further energy-savings compared to the RU-shutdown algorithm in \cite{ozlem_jsac}, and scales better when the network load increases thanks to the refined resource adaptability.

\begin{figure}
    \centering
\label{fig:DeploymentWF}
\includegraphics[width=\linewidth]{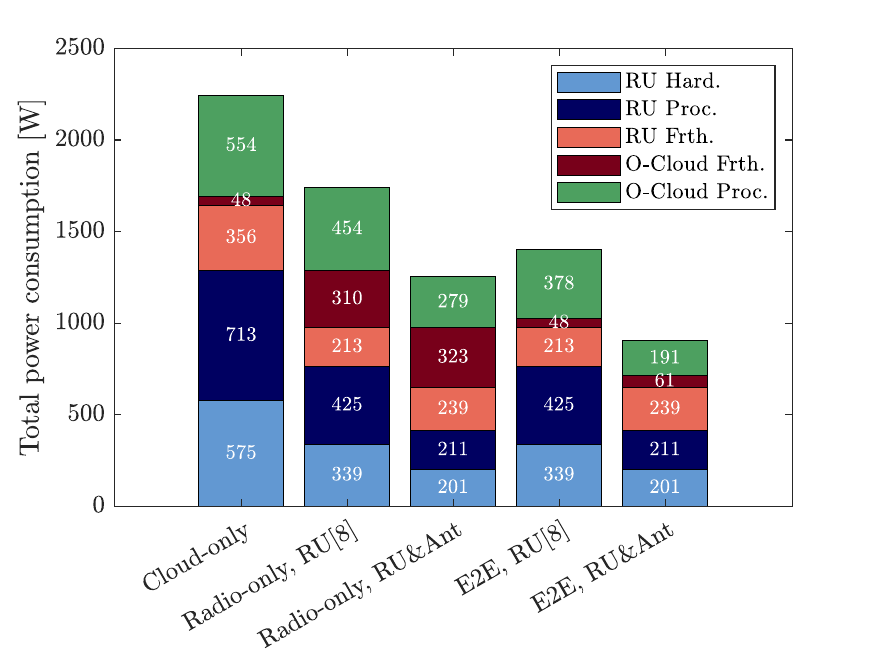} 
    \caption{Power consumption breakdown for different orchestration methods when $K=8$, under wireless fronthaul and functional split option 7.1. }
    \label{fig:BenchmarkBreakdown}
\end{figure}

Fig. \ref{fig:BenchmarkBreakdown} details the power consumption of different network components considering different algorithms when $K=8$, under wireless fronthaul and functional split option 7.1. Cloud-only orchestration consumes more power both in the radio-site and also for the O-Cloud processing. This demonstrates that active radio components not only increase the radio-site power consumption, but also increase the demand in the fronthaul, and create more processing demand, increasing total power consumption at each site. Radio-only orchestration, reduces the power consumption on radio-site by $42\%$ with RU shutdown algorithm, and by $67\%$ with the proposed joint RU and antenna shutdown algorithm. Without deactivating unused RF chains in fronthaul, power consumption grows by $660\%$, demonstrating the importance of the joint orchestration framework. Another interesting observation is that the RU fronthaul power consumption is higher with the proposed algorithm compared to the RU-shutdown algorithm. Since RU fronthaul is mainly affected by the number of active RUs, this trend shows that RU shutdown reduces the number of active RUs compared to the proposed algorithm. However, overall, reducing the total number of antennas provides significantly lower energy consumption due to the reduced hardware and processing resource requirements, where end-to-end orchestration scales all components to minimize their total effect.  


\subsection{The Effect of Deployment}
Fig. \ref{fig:Deployment} compares the power consumption of the network for different deployment densities under different functional splits and fronthaul options. While the x-axis shows the number of deployed RUs, the total number of deployed antennas and total radiated power from RUs in all cases (except $L=36$, where $4$ antennas per RU are deployed) are kept equal. INF cases denote the infeasibility, demonstrating that providing the SE target with the given wireless fronthaul limitation is infeasible with $4$ RUs. This is expected since the distance between UEs and RUs will be much longer when the number of RUs decreases, requiring significantly more antennas to be activated for each RU to compensate for the losses. Eventually, due to the wireless fronthaul limitation, RUs cannot activate the required number of antennas, and the UE rate requirements cannot be satisfied. For Split 7.2, the densest deployment is also infeasible under the wireless fronthaul. Since distributed precoding is used for Split 7.2, the number of UEs associated per RU is limited by the number of antennas deployed. In this case, more RUs need to be activated with all $2$ deployed antennas, eventually creating more infeasibility due to the wireless fronthaul limitation. In all figures, it can be observed that as the deployment becomes denser, RU and cloud fronthaul power consumption increase, especially for the wireless fronthaul case. However, the total  power consumption significantly decreases, regardless of the chosen split or fronthaul type. Deploying denser RUs combined with the proposed end-to-end orchestration mechanism harnesses macro-diversity of cell-free massive MIMO better, consequently reducing hardware, RU, and cloud processing power consumption.

\begin{figure}
    \centering
 \subfigure{ 
			    \label{fig:DeploymentWF}
	\includegraphics[width=\linewidth]{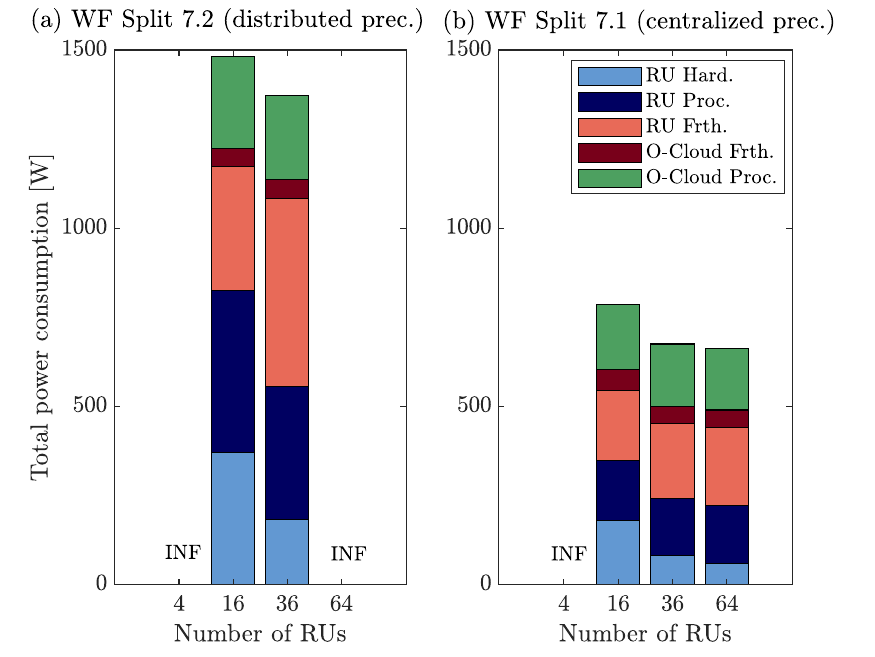} }
		\hspace{-.5cm}
        
        
\subfigure{ 
			\label{fig:DeploymentOF}
	\includegraphics[width=\linewidth]{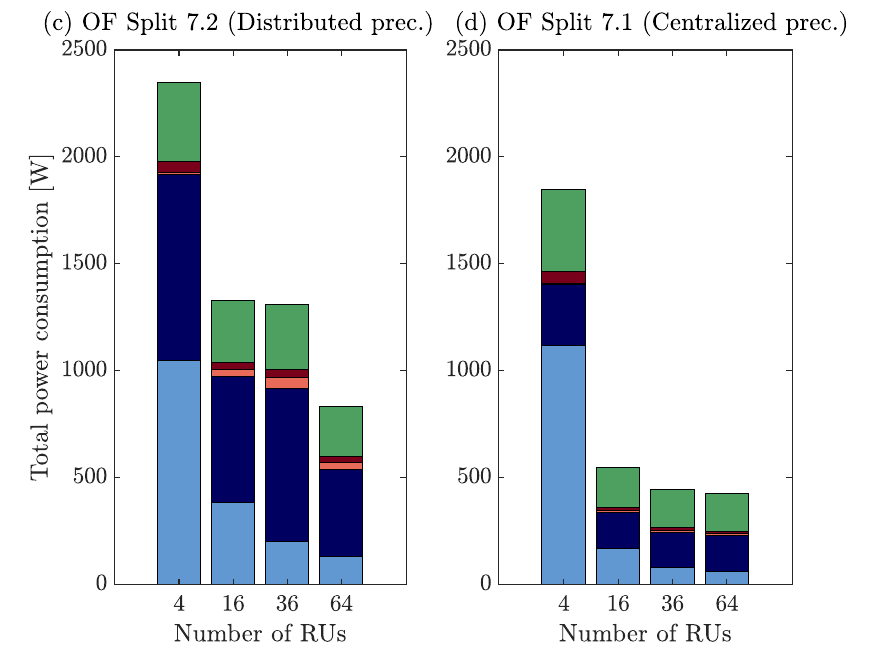} }
    \caption{Total network power consumption for different split options under different numbers of RUs, considering (a) wireless fronthaul (WF) and split 7.2, (b) WF and split 7.1, (c) optical fronthaul (OF) and split 7.2, (d) OF and split 7.1. The total number of antennas is equal in all cases.}
    \label{fig:Deployment}
    \vspace{-2mm}
\end{figure}

\section{Conclusion}
\label{sec:conclusion}
In this work, we investigated energy-efficient cell-free massive MIMO networks with wireless fronthaul through joint antenna, processing, fronthaul, and transmit power optimization.  We have proposed two different power minimization algorithms, one for centralized precoding under split Options 7.1 and 8, and one for distributed precoding for split Option 7.2. We utilized scenario sampling approximation and group sparsity optimization methods to obtain an efficient solution to the original non-convex problem for the centralized precoding. For the distributed precoding, we proposed a block-coordinate descent-based algorithm to efficiently divide the original non-convex mixed integer problem into several blocks of convex problems and closed-form updates. 

Our results demonstrate that, although being computationally more complex, centralized precoding and lower-layer split options provide the lowest network energy consumption by improving the SE performance with less resource requirements. The increased precoding complexity is handled thanks to the shared processing in the cloud, effectively lowering the power consumption by $50\%$  compared to distributed precoding (with Option 7.2). The end-to-end network orchestration outperforms cloud-only and radio-only orchestration by $70\%$ and $15\%$, respectively. Furthermore, by scaling network resources with the number of active antenna elements as proposed, the power consumption can be reduced by $13\%$ compared to scaling with the number of active RUs.  Finally, distributing the same number of antennas across the coverage area and centralizing processing significantly reduces the network power consumption through the proposed antenna activation scheme. These results position cell-free massive MIMO as not only a high-performance architecture but also a compelling, energy-efficient solution for sustainable future networks.

\bibliographystyle{IEEEtran}
\bibliography{IEEEabrv,refs}

\begin{thebibliography}{10}
\providecommand{\url}[1]{#1}
\csname url@samestyle\endcsname
\providecommand{\newblock}{\relax}
\providecommand{\bibinfo}[2]{#2}
\providecommand{\BIBentrySTDinterwordspacing}{\spaceskip=0pt\relax}
\providecommand{\BIBentryALTinterwordstretchfactor}{4}
\providecommand{\BIBentryALTinterwordspacing}{\spaceskip=\fontdimen2\font plus
\BIBentryALTinterwordstretchfactor\fontdimen3\font minus \fontdimen4\font\relax}
\providecommand{\BIBforeignlanguage}[2]{{%
\expandafter\ifx\csname l@#1\endcsname\relax
\typeout{** WARNING: IEEEtran.bst: No hyphenation pattern has been}%
\typeout{** loaded for the language `#1'. Using the pattern for}%
\typeout{** the default language instead.}%
\else
\language=\csname l@#1\endcsname
\fi
#2}}
\providecommand{\BIBdecl}{\relax}
\BIBdecl

\bibitem{cfmMIMOOr}
H.~Q. Ngo, A.~Ashikhmin, H.~Yang, E.~G. Larsson, and T.~L. Marzetta, ``Cell-free massive {MIMO} versus small cells,'' \emph{IEEE Transactions on Wireless Communications}, vol.~16, no.~3, pp. 1834--1850, 2017.

\bibitem{ericsson_2022_energy_5g}
\BIBentryALTinterwordspacing
{Ericsson}, ``Improving energy performance in 5g networks and beyond,'' Ericsson Technology Review, Tech. Rep., 2022, accessed: 2025-07-31. [Online]. Available: \url{https://www.ericsson.com/en/reports-and-papers/ericsson-technology-review/articles/improving-energy-performance-in-5g-networks-and-beyond}
\BIBentrySTDinterwordspacing

\bibitem{GroupSparsePrecGreen}
Y.~Shi, J.~Zhang, and K.~B. Letaief, ``Group sparse beamforming for green {Cloud-RAN},'' \emph{IEEE Transactions on Wireless Communications}, vol.~13, no.~5, pp. 2809--2823, 2014.

\bibitem{EEcfmMIMOSparsePrec}
S.~Chen, J.~Zhang, E.~Bj\"ornson, {\"O}.~T. Demir, and B.~Ai, ``Energy-efficient cell-free massive {MIMO} through sparse large-scale fading processing,'' \emph{IEEE Transactions on Wireless Communications}, vol.~22, no.~12, pp. 9374--9389, 2023.

\bibitem{CellFreeAntennaOptimization}
B.~Yan, Z.~Wang, J.~Zhang, and Y.~Huang, ``Joint antenna activation and power allocation for energy-efficient cell-free massive {MIMO} systems,'' \emph{IEEE Wireless Communications Letters}, vol.~14, no.~1, pp. 243--247, 2025.

\bibitem{minimize_energy_cf}
N.~Jayaweera, K.~B.~S. Manosha, N.~Rajatheva, and M.~Latva-aho, ``Minimizing energy consumption in cell-free massive {MIMO} networks,'' \emph{IEEE Transactions on Vehicular Technology}, vol.~73, no.~9, pp. 13\,263--13\,277, 2024.

\bibitem{radio_min_sparse}
T.~Van~Chien, E.~Björnson, and E.~G. Larsson, ``Joint power allocation and load balancing optimization for energy-efficient cell-free massive {MIMO} networks,'' \emph{IEEE Transactions on Wireless Communications}, vol.~19, no.~10, pp. 6798--6812, 2020.

\bibitem{ozlem_jsac}
{\"O}.~T. Demir, M.~Masoudi, E.~Björnson, and C.~Cavdar, ``Cell-free massive {MIMO} in {O-RAN}: Energy-aware joint orchestration of cloud, fronthaul, and radio resources,'' \emph{IEEE Journal on Selected Areas in Communications}, vol.~42, no.~2, pp. 356--372, 2024.

\bibitem{cloudRAN}
D.~Wang, C.~Zhang, Y.~Du, J.~Zhao, M.~Jiang, and X.~You, ``Implementation of a cloud-based cell-free distributed massive {MIMO} system,'' \emph{IEEE Communications Magazine}, vol.~58, no.~8, pp. 61--67, 2020.

\bibitem{ORAN}
J.~S. Vardakas, K.~Ramantas, E.~Vinogradov, M.~A. Rahman, A.~Girycki, S.~Pollin, S.~Pryor, P.~Chanclou, and C.~Verikoukis, ``Machine learning-based cell-free support in the {O-RAN} architecture: An innovative converged optical-wireless solution toward {6G} networks,'' \emph{IEEE Wireless Communications}, vol.~29, no.~5, pp. 20--26, 2022.

\bibitem{3gpp_tr_38_816}
\BIBentryALTinterwordspacing
``Study on {CU-DU} lower layer split for {NR} (release 18),'' 3rd Generation Partnership Project (3GPP), Technical Report TR 38.816 V18.0.0, 2023, accessed: 2025-07-31. [Online]. Available: \url{https://www.3gpp.org/ftp/Specs/archive/38_series/38.816/38816-f00.zip}
\BIBentrySTDinterwordspacing

\bibitem{larsen_survey_2019}
L.~M.~P. Larsen, A.~Checko, and H.~L. Christiansen, ``A {Survey} of the {Functional} {Splits} {Proposed} for {5G} {Mobile} {Crosshaul} {Networks},'' \emph{IEEE Communications Surveys \& Tutorials}, vol.~21, no.~1, pp. 146--172, 2019, conference Name: IEEE Communications Surveys \& Tutorials.

\bibitem{cell-free-book}
\BIBentryALTinterwordspacing
{\"O}.~T. Demir, E.~Bj\"{o}rnson, and L.~Sanguinetti, ``Foundations of user-centric cell-free massive {MIMO},'' \emph{Foundations and Trends® in Signal Processing}, vol.~14, no. 3-4, pp. 162--472, 2021. [Online]. Available: \url{http://dx.doi.org/10.1561/2000000109}
\BIBentrySTDinterwordspacing

\bibitem{8010338}
W.~Hao and S.~Yang, ``Small cell cluster-based resource allocation for wireless backhaul in two-tier heterogeneous networks with massive {MIMO},'' \emph{IEEE Transactions on Vehicular Technology}, vol.~67, no.~1, pp. 509--523, 2018.

\bibitem{milano}
M.~Brambilla, M.~Cerutti, W.~Colombo, and M.~Tornatore, ``Evaluation of power consumption in {5G} networks at {sub-6 GHz} and {mmWave},'' in \emph{Mediterranean Communication and Computer Networking Conference}, 2023, pp. 43--48.

\bibitem{7306533}
Z.~Gao, L.~Dai, D.~Mi, Z.~Wang, M.~A. Imran, and M.~Z. Shakir, ``{MmWave} massive-{MIMO}-based wireless backhaul for the {5G} ultra-dense network,'' \emph{IEEE Wireless Communications}, vol.~22, no.~5, pp. 13--21, 2015.

\bibitem{ericsson_minilink_6352}
{Ericsson}, ``{Ericsson MINI-LINK 6352 Datasheet},'' \url{https://www.winncom.com/docs/ericsson/Ericsson_MINI-LINK_6352_Datasheet.pdf}, 2022, accessed: 2025-07-31.

\bibitem{umurhan}
U.~Demirhan and A.~Alkhateeb, ``Enabling cell-free massive {MIMO} systems with wireless millimeter wave fronthaul,'' \emph{IEEE Transactions on Wireless Communications}, vol.~21, no.~11, pp. 9482--9496, 2022.

\bibitem{wireless_f}
S.~Elhoushy, M.~Ibrahim, and W.~Hamouda, ``Downlink performance of {CF} massive {MIMO} under wireless-based fronthaul network,'' \emph{IEEE Transactions on Communications}, vol.~71, no.~5, pp. 2632--2653, 2023.

\bibitem{neetu}
N.~R.R., O.~A. Topal, {\"O}.~T. Demir, E.~Björnson, C.~Cavdar, G.~Ghatak, and V.~A. Bohara, ``{UAV}-based cell-free massive {MIMO}: {J}oint activation and power optimization under fronthaul capacity limitations,'' \emph{IEEE Wireless Communications Letters}, pp. 1--1, 2025.

\bibitem{AsilomarConf24}
O.~A. Topal, O.~T. Demir, E.~Bj\"ornson, and C.~Cavdar, ``Energy-efficient cell-free massive {MIMO} with wireless fronthaul,'' in \emph{2024 58th Asilomar Conference on Signals, Systems, and Computers}, 2024, pp. 1591--1596.

\bibitem{interdonato2020local}
G.~Interdonato, M.~Karlsson, E.~Bj{\"o}rnson, and E.~G. Larsson, ``Local partial zero-forcing precoding for cell-free massive {MIMO},'' \emph{IEEE Transactions on Wireless Communications}, vol.~19, no.~7, pp. 4758--4774, 2020.

\bibitem{bjornson2024introduction}
E.~Bj{\"o}rnson and {\"O}.~T. Demir, ``Introduction to multiple antenna communications and reconfigurable surfaces,'' \emph{Now Publishers, Inc.}, 2024.

\bibitem{mmWavehybridpower}
Z.~Hao, Y.~Fang, X.~Yu, J.~Xu, L.~Qiu, L.~Xu, and S.~Cui, ``Energy-efficient hybrid beamforming with dynamic on-off control for integrated sensing, communications, and powering,'' \emph{IEEE Transactions on Communications}, vol.~73, no.~3, pp. 1709--1725, 2025.

\bibitem{Debaillie2015a}
B.~Debaillie, C.~Desset, and F.~Louagie, ``A flexible and future-proof power model for cellular base stations,'' in \emph{VTC Spring}, 2015.

\bibitem{malkowsky2017world}
S.~Malkowsky, J.~Vieira, L.~Liu, P.~Harris, K.~Nieman, N.~Kundargi, I.~C. Wong, F.~Tufvesson, V.~{\"O}wall, and O.~Edfors, ``The world’s first real-time testbed for massive {MIMO}: Design, implementation, and validation,'' \emph{IEEE Access}, vol.~5, pp. 9073--9088, 2017.

\bibitem{desset2016massive}
C.~Desset and B.~Debaillie, ``Massive {MIMO} for energy-efficient communications,'' in \emph{2016 46th European Microwave Conference (EuMC)}.\hskip 1em plus 0.5em minus 0.4em\relax IEEE, 2016, pp. 138--141.

\bibitem{scenario_sampling}
W.~Wang and S.~Ahmed, ``Sample average approximation of expected value constrained stochastic programs,'' \emph{Operations Research Letters}, vol.~36, no.~5, pp. 515--519, 2008.

\bibitem{Bach2012Sparsity}
F.~Bach, R.~Jenatton, J.~Mairal, and G.~Obozinski, \emph{Optimization with Sparsity-Inducing Penalties}.\hskip 1em plus 0.5em minus 0.4em\relax Foundations and Trends in Machine Learning, 2012, vol.~4, no.~1.

\end{thebibliography}

\end{document}